\definecolor{myred}{cmyk}{0, 0.75, .9, 0}
\definecolor{myblue}{cmyk}{.95, 0.45, 0, 0}
\definecolor{mygreen}{cmyk}{1, 0.15, 1, 0}
\definecolor{mygray}{gray}{0.3}
\tikzstyle{ddotted}=[dash pattern=on 1pt off .3pt]
\tikzset{
  dot/.style = {circle, draw, fill, minimum size=#1,inner sep=0pt, outer sep=0pt, very thin},
  dot/.default = 4pt,
  path/.style = {thick, rounded corners=1pt, mygreen, -latex},
  overpath/.style = {green,line width=.1pt,rounded corners=1pt,shorten >=4pt}
}
\newtheorem{theorem}{Theorem}[section]
\newtheorem{proposition}[theorem]{Proposition}
\newtheorem{lemma}[theorem]{Lemma}
\theoremstyle{definition}
\newtheorem{definition}[theorem]{Definition}
\newtheorem{example}[theorem]{Example}
\newcommand*{\innerproofname}{Proof}
\newenvironment{innerproof}[1][\innerproofname]{\begin{proof}[#1]}{\end{proof}}
\newcommand{\sd}{\text{SD}}
\renewcommand{\emptyset}{\varnothing}
\title{\bf Two-Sided Fairness in Many-to-One Matching}
\author[1]{Ayumi Igarashi} 
\author[2]{Naoyuki Kamiyama} 
\author[1]{Yasushi Kawase}
\author[3]{Warut Suksompong}
\author[4]{\authorcr Hanna Sumita}
\author[4]{Yu Yokoi}
\affil[1]{University of Tokyo, Japan}
\affil[2]{Kyushu University, Japan}
\affil[3]{National University of Singapore, Singapore}
\affil[4]{Institute of Science Tokyo, Japan}
\date{\vspace{-1cm}}
\begin{document}

\maketitle

\begin{abstract}
We consider a classic many-to-one matching setting, where participants need to be assigned to teams based on the preferences of both sides.
Unlike most of the matching literature, we aim to provide fairness not only to participants, but also to teams using concepts from the literature of fair division.
We present a polynomial-time algorithm that computes an allocation satisfying team-justified envy-freeness up to one participant, participant-justified envy-freeness, balancedness, Pareto optimality, and group-strategyproofness for participants, even in the possible presence of ties.
Our algorithm generalizes both the Gale--Shapley algorithm from two-sided matching as well as the round-robin algorithm from fair division.
We also discuss how our algorithm can be extended to accommodate quotas and incomplete preferences.
\end{abstract}

\section{Introduction}

Two-sided matching is a fundamental economic problem that involves finding a desirable match based on the preferences of both sides \citep{GusfieldIr89,Manlove13}.
Several applications of this problem are \emph{many-to-one} in nature, such as allocating medical residents to hospitals, students to colleges, and workers to firms---we shall refer to the two sides as \emph{participants} and \emph{teams}. 
The theory developed on two-sided matching has been applied to numerous real-world scenarios over the years, perhaps most notably to the allocation of medical students in the National Resident Matching Program of the United States \citep{RothSo90}.

In their seminal work, \citet{GaleSh62} proposed a notion of stability for two-sided matching, which is based on the concept of \emph{justified envy-freeness}:
a participant $p$ is said to have \emph{justified envy} toward another participant $q$ if $p$ prefers $q$'s team to $p$'s own team and, moreover, $q$'s team prefers $p$ to $q$.
While a justified envy-free allocation ensures fairness among participants, it may be highly unfair for certain teams, as the following example illustrates.
\begin{example}
\label{ex:indifference}
Consider a two-sided many-to-one matching instance where all participants are indifferent among all teams,\footnote{\citet[p.~269]{ErdilEr17} discussed the importance of allowing and handling ties in the preferences.
For example, when \citet{deHaanGaOo23} asked parents in Amsterdam to assign numerical values to schools for their children, some assigned equal values to certain schools even though they had the option of specifying finer preferences.
Similarly, the Scottish Foundation Allocation Scheme allows hospitals to express ties in their preferences over doctors, and most hospitals do express ties, with some ranking doctors in only three indifference classes.} and all teams share the same preference over participants.
Due to the (lack of) preference of the participants, every allocation is justified envy-free in this instance.
However, an allocation that assigns the strongest participants (according to the preference shared by all teams) to one team and the weakest participants to another team is likely to be perceived as unfair by the latter team, even though justified envy-freeness is fulfilled.\footnote{Moreover, the stability notion of \citet{GaleSh62} is fulfilled, assuming that every team has already filled its quota in the current allocation.}
Indeed, a hospital may well feel harshly treated if it receives only the worst doctors even though all doctors are indifferent among all hospitals; the same statement can be made for schools and students.    
For the sake of fairness among teams, it is therefore preferable to assign a set of participants of roughly equal desirability to every team in this instance.
A similar example can be constructed such that each participant has more than one indifference class in her preference.
\end{example}

Recently, \citet{IgarashiKaSu24} addressed the issue of team fairness using concepts from the literature of fair division \citep{BramsTa96,Moulin03,Moulin19}.
Canonical fair division can be viewed as a special case of two-sided matching where only one side (i.e., teams) has preferences, while the other side (i.e., participants) is completely indifferent.
A prominent fairness criterion in that literature is \emph{envy-freeness} (among teams), which stipulates that no team should envy another team based on the sets of participants that they receive \citep{Foley67,Varian74}.
Note that this criterion precisely addresses the unfairness that arises in \Cref{ex:indifference}.
However, envy-freeness cannot always be satisfied even by itself---consider, for example, when there is one exceptional participant that every team values more than all remaining participants combined.
In light of this, \citet{Budish11} proposed relaxing envy-freeness to \emph{envy-freeness up to one participant (EF1)}.
An allocation is said to be EF1 for teams if any envy that a team has toward another team can be eliminated by removing some participant from the latter team.
As we consider envy-freeness on both sides, for clarity, we shall refer to the notions  as \emph{team-EF1} and \emph{participant-justified EF}.
Assuming that teams have additive valuations over participants, a team-EF1 allocation always exists and can be found efficiently, for instance, using the \emph{round-robin algorithm}, where teams take turns picking their favorite participant from the remaining participants.\footnote{The round-robin algorithm is sometimes referred to as the \emph{draft} process when used in the context of sports.
}

In one-to-one matching, every allocation that assigns at most one participant to each team is team-EF1, so the existence of a team-EF1 and participant-justified EF allocation follows from the result of \citet{GaleSh62}.
Perhaps surprisingly, however, \citet{IgarashiKaSu24} gave an example demonstrating that such an allocation may not exist in the many-to-one setting.
\begin{example}[\citep{IgarashiKaSu24}]
\label{ex:incompatibility}
Assume that there are $n = 2$ teams and $m = 4$ participants.
Team~$1$ has value $3,3,2,2$ for the participants $p_1,p_2,p_3,p_4$, respectively, while team~$2$ has value $1,1,0,0$ for them.
The valuation of each team is additive: a team's value for a set of participants is simply the sum of its values for the participants in the set.
All four participants prefer team~$1$ to team~$2$.

In this example, team~$2$ needs at least one of $p_1,p_2$ in order for team-EF1 to be fulfilled---assume without loss of generality that it receives $p_2$.
Given this, team~$1$ needs at least one of $p_3,p_4$ for team-EF1 to be satisfied---assume without loss of generality that it receives $p_3$.
However, this results in $p_2$ having participant-justified envy toward~$p_3$.
\end{example}
\citet{IgarashiKaSu24} showed that this incompatibility persists even if we were to relax team-EF1 to ``team-EF$k$'' for any constant~$k$.
Nevertheless, consider the allocation that assigns all four participants to team~$1$.
Even though team~$2$ envies team~$1$ by more than one participant, this envy could be considered ``unjustified'', as all participants prefer team~$1$ to team~$2$.
Our first, conceptual, contribution is to introduce a relaxation of team-EF1 that we call \emph{team-justified EF1}.
An allocation is said to be team-justified EF1 if the following holds: whenever a team~$i$ compares its set of participants to that of another team~$j$, after removing the participants who strictly prefer $j$ to $i$ from the latter set, any envy that $i$ still has toward $j$ can be eliminated by removing one further participant from $j$'s set.
Observe that when all participants are completely indifferent, any envy by a team is justified, so team-justified EF1 reduces to team-EF1.
On the other hand, when participants have preferences over teams, team-justified EF1 allows us to take participant preferences into account in a meaningful manner.
Can we ensure fairness for both sides by guaranteeing team-justified EF1 and participant-justified EF, perhaps together with other desirable properties?

\subsection{Overview of Results}

Following the majority of the matching literature, we assume that each team (resp., participant) has an ordinal ranking over the participants (resp., teams), possibly with ties.
Under ordinal preferences, team-EF1 and team-justified EF1 can be defined in a natural manner using the \emph{stochastic dominance (SD)} relation.
We refer to \Cref{sec:prelims} for the formal definitions, but remark here that the resulting notions---\emph{team-SD-EF1} and \emph{team-justified SD-EF1}---imply their non-SD counterparts for any additive valuations consistent with the rankings.\footnote{Although the SD relation is often used to compare randomized outcomes, it is also natural for comparing deterministic ones. In particular, SD-EF1 has been studied in a number of papers \citep{FreemanMiSh21,AzizFrSh24,PrakashNi24}. \label{footnote:SD}}

In \Cref{sec:warmup}, we warm up by describing two approaches that satisfy some but not all of the desired properties.
The first approach is to run the round-robin algorithm but only allow each team to pick participants who rank the team first.
While this approach yields fairness to both sides in the form of team-justified SD-EF1 and participant-EF,\footnote{This is a strengthening of participant-justified EF where a participant's envy does not need to be justified.} it may result in an extremely unbalanced allocation.
The second approach involves creating an auxiliary instance, which is a one-to-one matching instance with strict preferences, by making copies that represent the slots in each team, and then computing any participant-justified EF matching in this instance (e.g., using the Gale--Shapley algorithm).
This approach guarantees team-justified SD-EF1, participant-justified EF, and balancedness.
However, it does not ensure \emph{swap stability}, i.e., there may exist a pair of participants such that swapping them makes some involved participant or team better off and none of them worse off.

In \Cref{sec:main-algo}, we present our main algorithm, which produces an allocation that fulfills all of the key properties we consider.
Specifically, the returned allocation always satisfies team-justified SD-EF1, participant-justified EF, \emph{balancedness}---which means that the numbers of participants allocated to any pair of teams differ by at most one---and \emph{Pareto optimality (PO)}, an efficiency notion that strengthens swap stability.
The algorithm, which can be implemented in polynomial time, creates team-slots for each team and maintains a set of eligible teams for each participant; initially, no team is eligible for any participant.
In each iteration, the algorithm computes an eligible matching between team-slots and participants that maximizes the team-slots' values in round-robin order, and breaks any ties by optimizing the participants' preferences lexicographically.
As long as the matching is incomplete, each unmatched participant's eligibility set is expanded to include the participant's most-preferred team(s) that are still ineligible.
When all participants are indifferent between all teams, our algorithm reduces to the round-robin algorithm, whereas when all teams and participants have strict preferences, it reduces to the Gale--Shapley algorithm.
In \Cref{sec:SP}, we demonstrate a further resemblance between our algorithm and Gale--Shapley by showing via an elaborate proof that our algorithm is \emph{group-strategyproof} for participants, that is, no group of participants can misreport their preferences in such a way that all of them strictly benefit.\footnote{
One could define an even stronger version of group-strategyproofness, where no group of participants can collude to misreport their preferences in a way that makes none of the members worse off and at least one member strictly better off.
However, even in one-to-one matching, this stronger version is violated by the Gale--Shapley algorithm \citep{Huang06}.}

Finally, in \Cref{sec:discussion}, we discuss how our algorithm can be extended to accommodate teams that have quotas as well as participants or teams that may prefer being unassigned.
In addition, we discuss our results in the context of stable matching with indifferences, by showing that they yield a mechanism that is simultaneously stable, PO, and group-strategyproof for participants.

\subsection{Further Related Work}

Our work combines elements of fair division \citep{BramsTa96,RobertsonWe98,Moulin03,Moulin19} as well as two-sided matching \citep{GusfieldIr89,RothSo92,Manlove13}.
While each of these two areas has an extensive literature on its own, few connections have been established between them thus far.

The vast majority of work in fair division assumes one-sided preferences---in our terminology, teams have preferences over participants, but participants do not have preferences over teams.
\citet{IgarashiKaSu24} studied fair division under two-sided preferences assuming that teams have cardinal utilities.
As mentioned earlier, they showed that an allocation satisfying team-EF1 and participant-justified EF may not exist; in fact, they also proved that deciding whether such an allocation exists is NP-hard.
Nevertheless, they devised a polynomial-time algorithm that computes an allocation satisfying team-EF1, balancedness, and swap stability.
\citet{JainVa24} investigated the complexity of maximizing the Nash welfare---that is, the geometric mean of the utilities---in this setting.
\citet{FreemanMiSh21} considered many-to-many matching and proposed the notion of ``double-EF1'', which requires EF1 to hold for both sides.
However, in many-to-one matching, EF1 is meaningless on the participant side, as it is always trivially satisfied.

In the two-sided matching literature, fairness is often considered in the form of participant-justified EF, typically referred to simply as justified EF \citep{AbdulkadirogluSo03,FragiadakisIwTr15,KamadaKo17,WuRo18,Yokoi20}.
As we discussed, this may lead to large (justified) envy among teams.
In addition, a significant portion of the matching literature is based on the assumption of strict preferences.
Specifically, \citet{DubinsFr81} and \citet{Roth82} proved that the (participant-proposing) Gale--Shapley algorithm is strategyproof for participants under the assumption of strict preferences.
However, as \citet[p.~269]{ErdilEr17} argued, ties in preferences are in fact widespread in practical applications, and the way these ties are handled can have important consequences.
Erdil and Ergin designed polynomial-time algorithms that compute a PO and ``stable'' matching as well as a participant-optimal stable matching; their notion of stability is similar to participant-justified EF but takes into account the ability of participants to switch to an empty slot in another team.
\citet{NarangBiNa22} examined the problem of finding a stable matching that is optimal for both sides with respect to the ``leximin'' ordering.

\section{Preliminaries}
\label{sec:prelims}

Let $T = [n]$ be the set of teams, where $[z] \coloneqq \{1,\dots,z\}$ for each positive integer~$z$, and let $P = \{p_1,\dots,p_m\}$ be the set of participants.
We sometimes refer to either a team or a participant as a \emph{party}.
Each team $i\in T$ has a complete and transitive preference $\succsim_i$ over the participants (allowing indifferences), with $\succ_i$ and $\sim_i$ denoting the strict and equivalence part of $\succsim_i$, respectively.
Similarly, each participant $p\in P$ has a weak transitive preference $\succsim_p$ over the teams.
For ease of reading, we sometimes write, e.g., $(p_1, p_2) \succ_i (p_3, p_4)$ instead of $p_1\sim_i p_2\succ_i p_3\sim_i p_4$.
An \emph{instance} consists of the set of teams $T$, the set of participants $P$, and the preferences of both sides.

An \emph{allocation} $A = (A_1,\dots,A_n)$ is an ordered partition of $P$ into $n$ parts, where the part $A_i$ is allocated to team~$i$.
We shall investigate several desirable properties of allocations, starting with fairness on the participant side.

\begin{definition}
Given an allocation~$A$, a participant $p\in A_i$ is said to have \emph{envy} toward another participant $p'\in A_j$ if $j\succ_p i$.
The envy is \emph{justified} if it additionally holds that $p\succ_j p'$.
An allocation is \emph{participant-EF} (resp., \emph{participant-justified EF}) if no participant has envy (resp., justified envy) toward another participant according to the allocation.
\end{definition}

Participant-justified EF forms part of the stability notion commonly studied in the matching literature (e.g., \citep{GaleSh62}), with the other part being ``non-wastefulness'', that is, no participant prefers to switch to a team with sufficient capacity to accept the participant.
However, we will only consider team capacities in \Cref{sec:discussion}.

Next, we consider fairness on the team side.
Given a team $i\in T$, the \emph{stochastic dominance (SD)} relation $\succsim_i^{\sd}$ is a partial order of sets of participants defined as follows (see also \Cref{footnote:SD}).
For any two sets of participants $Q, R\subseteq P$, it holds that $Q\succsim_i^{\sd} R$ if and only if there exist
\begin{enumerate}[(i)]
\item a subset $Q'\subseteq Q$ with $|Q'| = |R|$, and
\item a bijection $\pi\colon Q'\to R$ such that $p\succsim_i\pi(p)$ for every $p\in Q'$.
\end{enumerate}
For example, if a team~$i$ has the ranking $p_1\succ_i p_2\succ_i p_3\succ_i p_4$, then $\{p_1,p_3\}\succ_i^\sd \{p_2,p_4\} \succ_i^\sd \{p_3\}$, whereas $\{p_1,p_4\}\not\succsim_i^\sd \{p_2,p_3\}$ and $\{p_2,p_3\}\not\succsim_i^\sd \{p_1,p_4\}$.
Note that the relation $\succsim_i^\sd$ is transitive but not necessarily complete.

\begin{definition}
An allocation~$A$ is said to satisfy 
\begin{itemize}
\item \emph{team-SD-EF1} if for all distinct $i, j\in T$, it holds that $A_i\succsim_i^\sd (A_j\setminus X)$ for some $X\subseteq A_j$ with $|X|\le 1$;
\item \emph{team-justified SD-EF1} if for all distinct $i, j\in T$, it holds that $A_i\succsim_i^\sd (B_j\setminus X)$ for some $X\subseteq B_j$ with $|X|\le 1$, where $B_j$ denotes the set of participants in $A_j$ who weakly prefer team~$i$ to team~$j$.
\end{itemize}
\end{definition}

In any instance, a team SD-EF1---which must also be team-justified SD-EF1---exists and can be found by the \emph{round-robin algorithm}, which lets the teams pick their favorite participant in the order $1,2,\dots,n,1,2,\dots$, until all participants have been assigned.
If teams have additive valuations over participants and their values for individual participants are consistent with the rankings, then team-SD-EF1 (resp., team-justified SD-EF1) with respect to the rankings implies ``team-EF1'' (resp., ``team-justified EF1'') with respect to the additive valuations.
More generally, suppose that teams have \emph{responsive} preferences over sets of participants, meaning that if a team prefers participant $p$ to $p'$, then it also prefers the set $Q\cup\{p\}$ to $Q\cup\{p'\}$ for any $Q\subseteq P$ such that $p,p'\not\in Q$.
Then, team-SD-EF1 (resp., team-justified SD-EF1) with respect to the rankings over participants again implies ``team-EF1'' (resp., ``team-justified EF1'') with respect to the rankings over \emph{sets} of participants.\footnote{While one could consider an even more general setting where teams have ordinal preferences over sets of participants and these preferences are not necessarily responsive, even with one-sided preferences, it is unknown whether (team-)EF1 alone can always be satisfied \citep[Open problem 4.1]{Suksompong21}.}

Another property that can be useful in many-to-one matching is \emph{balancedness}, which requires the participants to be distributed as equally among the teams as possible.
Note that an allocation returned by the round-robin algorithm is guaranteed to be balanced.

\begin{definition}
An allocation $A$ is said to be \emph{balanced} if $\big||A_i|-|A_j|\big|\le 1$ for all $i,j\in T$.
Equivalently, an allocation $A$ is balanced if $|A_i|\in\{\lfloor m/n\rfloor, \lceil m/n\rceil\}$ for all $i\in T$.
\end{definition}

Besides fairness, one may also want the allocation to be optimal with respect to some notion of optimality.
We consider two natural optimality notions.

\begin{definition}
Given an allocation~$A$, a swap between participants $p\in A_i$ and $p'\in A_j$ (for some $i,j\in T$) is a \emph{beneficial swap} if it makes all four involved parties weakly better off and at least one party strictly better off (where we consider the SD relation for teams).
Another allocation $A'$ \emph{Pareto dominates} $A$ if all parties are weakly better off in $A'$ than in $A$, with at least one party being strictly better off; in this case, we say that $A'$ is a \emph{Pareto improvement} of~$A$.

An allocation $A$ is said to be \emph{swap stable} if it does not admit a beneficial swap, and \emph{Pareto optimal (PO)} if it does not admit a Pareto improvement.
\end{definition}

\section{Warm-up: Two Approaches}
\label{sec:warmup}

In this section, we describe two approaches that fulfill some but not all of the desired properties.

The first approach is to run the round-robin algorithm, but only allow each team to pick participants who rank the team first.
If a team has no such participant left, the team's turn is skipped.
This approach is formalized as \Cref{alg:top-choice}.

\begin{algorithm}
  \DontPrintSemicolon
  \SetAlgoLined
  \For{$i\in T$}{
    $P_i \leftarrow \{p\in P \mid i\succsim_p j \text{ for all } j\in T\}$\;
    $A_i \leftarrow \emptyset$
  }
  $R \leftarrow P$\;
  \While{$R\ne\emptyset$}{
    \For{$i \leftarrow 1$ \KwTo $n$}{ \label{line:for-loop}
        \If{$P_i\cap R\ne \emptyset$}{
            $p \leftarrow$ most preferred participant in $P_i\cap R$ according to $\succsim_i$ \;
            $A_i \leftarrow A_i\cup\{p\}$ and
            $R \leftarrow R\setminus\{p\}$\;
        }
    }
  }
    \Return $(A_1,\dots,A_n)$
    \caption{Round-robin over top-ranking participants}
    \label{alg:top-choice}
\end{algorithm}

\begin{theorem}
\Cref{alg:top-choice} returns an allocation that satisfies team-justified SD-EF1 and participant-EF.
\end{theorem}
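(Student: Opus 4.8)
The plan is to establish the two properties in turn, after first checking that \Cref{alg:top-choice} halts with a well-defined allocation. For termination and completeness: since each participant's preference is complete over the finite set $T$, every $p$ has a maximal team and hence lies in some $P_i$, so $P_1,\dots,P_n$ cover $P$; consequently, in any iteration of the \textbf{while} loop with $R\neq\emptyset$, picking some $p\in R$ and the index $i$ with $p\in P_i$ shows that by the time team~$i$ is processed, either $p$ has already left $R$ or $P_i\cap R\neq\emptyset$, so at least one participant leaves $R$ in that iteration; thus the loop runs at most $m$ times and returns an ordered partition of $P$. Participant-EF is then immediate: a participant $p\in A_i$ was selected from $P_i\cap R$, so $p\in P_i$, i.e.\ $i\succsim_p j$ for every $j\in T$; hence $p$ never strictly prefers another team to its own, and has no envy.

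For team-justified SD-EF1, fix distinct $i,j\in T$. The conceptual crux is the observation that $B_j\subseteq A_j\cap P_i$: any $p\in A_j$ was picked from $P_j$, so $j\succsim_p i$, and combining this with $p\in B_j$ (i.e.\ $i\succsim_p j$) forces $i\sim_p j$; since $j$ is maximal for $p$, so is $i$, whence $p\in P_i$. In words, every participant that ``counts against'' team~$j$ in the justified comparison was in fact available for team~$i$ to take throughout the run. With this in hand I would run the classical round-robin analysis. Because $R$ only shrinks and each $P_\ell$ is fixed, once team~$\ell$ is skipped it is skipped forever, so team~$\ell$ is active exactly in rounds $1,\dots,r_\ell$ for some $r_\ell\ge 0$, picking one participant $a^\ell_k$ in round~$k$, with $A_\ell=\{a^\ell_1,\dots,a^\ell_{r_\ell}\}$. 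Writing $a_k:=a^i_k$, $b_k:=a^j_k$, and $S:=\{k\in[r_j] : b_k\in B_j\}$, we get $B_j=\{b_k : k\in S\}$ with each such $b_k\in P_i$.

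The core fact to exploit is that whenever team~$i$ makes its round-$k$ pick $a_k$, any participant that team~$j$ picks in a round strictly later than~$k$ is still in $R$ at that moment, so if it lies in $P_i$ it was a candidate team~$i$ passed over, giving $a_k\succsim_i b_\ell$ for the relevant $\ell$. Applying the same idea at team~$i$'s turn in round $r_i+1$ (where $P_i\cap R=\emptyset$) bounds $S$: it gives $S\subseteq[r_i]$ when $i<j$, and $S\subseteq[r_i+1]$ when $i>j$, the extra allowed index arising because for $i>j$ team~$j$'s round-$(r_i+1)$ pick is removed before team~$i$ is processed, so it cannot yield a contradiction. Pairing then concludes: if $i<j$, map $b_k\mapsto a_k$ for $k\in S$, obtaining $A_i\succsim_i^{\sd} B_j$ with $X=\emptyset$; if $i>j$, set $X:=B_j\cap\{b_1\}$ (so $|X|\le 1$) and map $b_k\mapsto a_{k-1}$ for $k\in S$ with $k\ge 2$, obtaining $A_i\succsim_i^{\sd}(B_j\setminus X)$. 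In each case the map is an injection into $A_i$ that weakly improves every paired participant under $\succsim_i$, which is exactly the definition of the SD relation, so team-justified SD-EF1 holds.

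I expect the main obstacle to be the bookkeeping around skipped teams together with the off-by-one coming from the round-robin order — that is, arguing cleanly that $S\subseteq[r_i]$ (resp.\ $[r_i+1]$) by reasoning about the first round in which team~$i$ has no eligible participant left, and verifying that the discarded set $X$ indeed has size at most one. By contrast the containment $B_j\subseteq A_j\cap P_i$ is short, but it is the step that makes the ``justified'' variant tractable here: without restricting attention to $A_j\cap P_i$ the allocation could be arbitrarily unbalanced and team~$i$ would have no control over what team~$j$ accumulates, so some relaxation of this form is essential.
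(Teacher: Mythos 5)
Your proof is correct and takes essentially the same route as the paper's: pair team~$i$'s round-$k$ pick with team~$j$'s round-$k$ pick (round-$(k+1)$ pick when $i>j$, discarding $j$'s first pick) and observe that any participant of $A_j$ who weakly prefers $i$ to $j$ must lie in $P_i$ and hence was still available when team~$i$ chose. Your write-up is more careful about skipped turns and the index bookkeeping (the bounds $S\subseteq[r_i]$ resp.\ $[r_i+1]$), which the paper's terse argument glosses over, but the underlying idea is identical.
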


\begin{proof}
Participant-EF is clear since each participant is assigned to her most preferred team.
For team-justified SD-EF1, consider two teams $i$ and $j$, and assume first that $i < j$.
In each iteration of the for-loop in \Cref{line:for-loop}, if $i$ likes the participant that it receives less than the participant that $j$ receives, that must be because the latter participant is in $P_j$ but not $P_i$, in which case the participant strictly prefers $j$ to $i$.
Hence, either $i$ does not envy $j$ with respect to this round, or $i$'s envy toward $j$ is not justified.
A similar argument applies when $i > j$ by considering the participant that $i$ receives in an iteration and the participant that $j$ receives in the next iteration, so if we remove the participant that $j$ receives in the first iteration, then $i$ does not have justified envy toward~$j$.
Thus, we can conclude that team-justified SD-EF1 from $i$ toward $j$ is satisfied.
\end{proof}

While \Cref{alg:top-choice} produces an allocation that is fair for both sides, the resulting allocation may be extremely unbalanced.
For example, if all participants have team~$1$ as their unique favorite team, then all of them will be assigned to team~$1$, which in many applications would be undesirable.
For similar reasons, participant-EF and balancedness cannot always be attained simultaneously, so a relaxation to participant-justified EF is necessary.

We next present an approach that yields team-justified SD-EF1 and participant-justified EF along with balancedness.
The approach involves creating an auxiliary instance, which is a one-to-one matching instance with strict preferences, where the team side consists of the slots within each team.
Each team-slot's preference over participants follows that of the original team (with ties broken arbitrarily).
Each participant's preference over team-slots also follows her preference over the original teams, but with ties broken in an ``interleaved'' manner.
For example, if a participant~$p$ has the preference $(1, 2)\succ_p (3, 4)$ over four teams and there are three slots in each team (denoted by subscripts), then the participant's preference in the auxiliary instance is 
\[
1_1 \succ_p 2_1 \succ_p 1_2 \succ_p 2_2 \succ_p 1_3 \succ_p 2_3 \succ_p 3_1 \succ_p 4_1 \succ_p 3_2 \succ_p 4_2 \succ_p 3_3 \succ_p 4_3.
\]
We then compute any participant-justified EF matching in the auxiliary instance (e.g., using the Gale--Shapley algorithm), and return the allocation where each team is assigned the participants that its slots receive.
This approach is formalized as \Cref{alg:auxiliary}.

\begin{algorithm}[t]
  \DontPrintSemicolon
  \SetAlgoLined
  $k \leftarrow \lfloor m/n\rfloor$ \;
  $r \leftarrow m - kn$ \;
  \lFor{$i \leftarrow 1$ \KwTo $r$}{
    Create $k+1$ slots $i_1, \dots, i_{k+1}$ of team~$i$.
  }
  \lFor{$i \leftarrow r+1$ \KwTo $n$}{
    Create $k$ slots $i_1, \dots, i_{k}$ of team~$i$.
  }
  The auxiliary instance is a one-to-one matching instance which consists of the slots of all teams (a total of $m$ slots) and the $m$ original participants. \;
  Each team-slot has a strict preference over participants which is the same as the original team's preference, with ties broken arbitrarily. \;
  Each participant has a strict preference over team-slots which follows the participant's preference over original teams, with ties broken in an ``interleaved'' manner.
  Specifically, if a participant strictly prefers team $i$ to team~$j$, then it prefers all slots of $i$ to all slots of~$j$.
  If the participant likes $i$ and $j$ equally, where $i < j$, then it prefers $i_\ell$ to $j_{\ell'}$ whenever $\ell \le \ell'$. \;
  $S \leftarrow $ any participant-justified EF matching in the auxiliary instance (e.g., computed using the Gale--Shapley algorithm) \;
  
    \Return the allocation where each team is assigned the participants that its slots receive in~$S$
    \caption{Auxiliary instance approach}
    \label{alg:auxiliary}
\end{algorithm}

\begin{theorem}
\label{thm:auxiliary}
\Cref{alg:auxiliary} returns an allocation that satisfies team-justified SD-EF1, participant-justified EF, and balancedness.
\end{theorem}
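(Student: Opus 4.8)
The plan is to establish the three properties in order of difficulty, pushing everything through the auxiliary instance, and first recording that the matching $S$ it produces is \emph{perfect}: the auxiliary instance has $m$ team-slots and $m$ participants, all with complete strict preference lists, so a stable matching---in particular the Gale--Shapley output---leaves no slot and no participant unmatched (an unmatched slot together with an unmatched participant, which must exist simultaneously since the two sides have equal size, would form a blocking pair). Balancedness is then immediate: each team~$i$ receives exactly as many participants as it has slots, namely $\lceil m/n\rceil$ if $i\le r$ and $\lfloor m/n\rfloor$ otherwise, so $|A_i|\in\{\lfloor m/n\rfloor,\lceil m/n\rceil\}$ for all~$i$.

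For participant-justified EF I would argue by contrapositive. Suppose the returned allocation has $p\in A_i$ and $p'\in A_j$ with $j\succ_p i$ and $p\succ_j p'$. Under $S$, $p$ occupies some slot $i_\ell$ and $p'$ some slot $j_{\ell'}$. Since $p$ strictly prefers team~$j$ to team~$i$, the interleaved construction makes $p$ prefer every slot of~$j$ to every slot of~$i$, so $p$ prefers $j_{\ell'}$ to $i_\ell$; and since the preference of $j_{\ell'}$ refines $\succsim_j$, we get $p\succ_{j_{\ell'}}p'$. Hence $p$ has justified envy toward $p'$ in $S$, contradicting that $S$ is participant-justified EF.

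The crux is team-justified SD-EF1, for which I would use the standard reformulation of the SD relation on sets: $A_i\succsim_i^{\sd}Q$ iff there is an injection $\phi\colon Q\to A_i$ with $\phi(q)\succsim_i q$ for every $q\in Q$. Fix distinct $i,j\in T$; for each index $\ell$ let $x_\ell$ (resp.\ $y_\ell$) be the participant in slot $i_\ell$ (resp.\ $j_\ell$) under $S$, so $A_i=\{x_1,\dots,x_a\}$ and $A_j=\{y_1,\dots,y_b\}$, and let $B_j\subseteq A_j$ be the set of $y_\ell$ that weakly prefer team~$i$ to team~$j$. Note $\lfloor m/n\rfloor\le a,b\le\lceil m/n\rceil$, and that $a\ge b$ whenever $i<j$ (otherwise $a=\lfloor m/n\rfloor$ and $b=\lceil m/n\rceil$, forcing $i>r\ge j$, contrary to $i<j$). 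The key claim, extracted from participant-justified EF of $S$, is: \emph{if $y_\ell\in B_j$ prefers some slot $i_s$ to her own slot $j_\ell$, then $x_s\succsim_i y_\ell$}---for otherwise $y_\ell\succ_i x_s$, hence also $y_\ell\succ_{i_s}x_s$ (the slot's preference refines $\succsim_i$), which together with $i_s\succ_{y_\ell}j_\ell$ gives $y_\ell$ justified envy toward $x_s$ in $S$. It then remains to identify, for each $y_\ell\in B_j$, a team-$i$ slot she prefers to $j_\ell$: if $y_\ell$ strictly prefers $i$ to $j$ she prefers all of $i$'s slots to $j_\ell$, while if $y_\ell$ is indifferent between $i$ and~$j$ the interleaved tie-breaking makes her prefer $i_1,\dots,i_\ell$ to $j_\ell$ when $i<j$ and $i_1,\dots,i_{\ell-1}$ to $j_\ell$ when $i>j$. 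Consequently, when $i<j$ we obtain $x_\ell\succsim_i y_\ell$ for every $y_\ell\in B_j$, and (using $a\ge b$) the map $y_\ell\mapsto x_\ell$ is an injection $B_j\to A_i$, yielding $A_i\succsim_i^{\sd}B_j$ with $X=\emptyset$; when $i>j$ we obtain $x_{\ell-1}\succsim_i y_\ell$ for every $y_\ell\in B_j$ with $\ell\ge 2$, so taking $X=B_j\cap\{y_1\}$ and $\phi(y_\ell)=x_{\ell-1}$ (legitimate since $\ell-1\le b-1\le\lfloor m/n\rfloor\le a$) gives an injection $B_j\setminus X\to A_i$ witnessing $A_i\succsim_i^{\sd}(B_j\setminus X)$ with $|X|\le1$. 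Since every ordered pair of distinct teams falls into one of these two cases, team-justified SD-EF1 follows.

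The step I expect to demand the most care is pinning down exactly which team-$i$ slots a given participant of $B_j$ prefers to her own slot, since this is where the interleaved tie-breaking interacts with the direction $i\lessgtr j$; it is precisely the off-by-one arising when $i>j$ that forces removing one participant---and hence gives ``EF1'' rather than full envy-freeness on the team side.
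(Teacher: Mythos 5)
Your proof is correct and follows essentially the same route as the paper's: balancedness from the slot counts, participant-justified EF by lifting a justified-envy pair to the auxiliary instance, and team-justified SD-EF1 via slot-by-slot comparisons extracted from participant-justified EF of $S$ together with the interleaved tie-breaking. The only difference is cosmetic---the paper compares $i_\ell$ with $j_{\ell+1}$ uniformly, whereas you split into $i<j$ (comparing $i_\ell$ with $j_\ell$, which in fact yields full justified SD-envy-freeness with $X=\emptyset$) and $i>j$ (comparing $i_{\ell-1}$ with $j_\ell$)---a slightly sharper but equivalent bookkeeping of the same argument.
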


\begin{proof}
Balancedness follows immediately from the fact that each team receives either $\lfloor m/n\rfloor$ or $\lceil m/n\rceil$ participants from the algorithm.
For participant-justified EF, suppose for contradiction that a participant $p\in A_i$ envies another participant $p'\in A_j$ (i.e., $j\succ_p i$), and moreover the envy is justified (i.e., $p\succ_j p'$).
Then, in the auxiliary instance, $p$ prefers all slots of $j$ to all slots of $i$, and moreover all slots of $j$ prefer $p$ to $p'$.
Hence, no matter which team-slots the participants $p$ and $p'$ are assigned to in the auxiliary instance, participant-justified EF of the matching in the auxiliary instance is violated, a contradiction.

It remains to show team-justified SD-EF1.
Consider any two teams $i$ and $j$, and the participants $p$ and $p'$ assigned to the team-slots $i_\ell$ and $j_{\ell + 1}$, respectively, for some $\ell$.
If $p'$ strictly prefers $j$ to~$i$, then $p'$ does not count in the justified envy of $i$ toward $j$.
Else, $p'$ weakly prefers $i$ to $j$, so $p'$ strictly prefers $i_\ell$ to $j_{\ell + 1}$ in the auxiliary instance.
Since $p'$ is assigned to $j_{\ell+1}$ and $p$ is assigned to $i_\ell$, by participant-justified EF and the fact that all preferences are strict in the auxiliary instance, $i_\ell$ must strictly prefer $p$ to $p'$.
This means that $i$ weakly prefers $p$ to $p'$ in the original instance.
Hence, when we compare the slot~$i_\ell$ with the slot~$j_{\ell+1}$, either
\begin{enumerate}
\item[(a)] the participant assigned to $j_{\ell+1}$ strictly prefers $j$ to $i$, in which case this participant does not contribute to the justified envy from $i$ to $j$; or
\item[(b)] $i$ weakly prefers the participant assigned to~$i_\ell$ to the participant assigned to $j_{\ell+1}$.
\end{enumerate}

Assume that $i$ receives $k$ slots and $j$ receives $k+1$ slots (other cases can be handled similarly).
By the argument above, we have that according to the SD relation, $i$ prefers the set of participants assigned to $i_1,\dots,i_k$ (i.e., $i$'s bundle) at least as much as the set of participants assigned to $j_2,\dots,j_{k+1}$ who do not strictly prefer $j$ to~$i$ (i.e., the ``justified'' part of $j$'s bundle minus~$j_1$).
It follows that $i$ is team-justified SD-EF1 toward $j$, as desired.
\end{proof}

Can we achieve swap stability in addition to the properties provided by \Cref{thm:auxiliary}?
A natural attempt is to run \Cref{alg:auxiliary} and then allow participants to make beneficial swaps as long as the allocation admits at least one such swap.
Unfortunately, the following example shows that this approach may break team-justified SD-EF1.

\begin{example}
Consider an instance with $n = 3$ teams and $m = 6$ participants.
Teams $1$ and $2$ are indifferent between all participants, while team~$3$ has the preference 
$
(p_1, p_2)\succ (p_3, p_4, p_5, p_6)
$.
Participant~$p_5$ has the preference $1\succ (2, 3)$, while all other participants are indifferent between all teams.

In the auxiliary instance, assume that all team-slots have the preference $p_1\succ p_2\succ p_3\succ p_4\succ p_5\succ p_6$; note that this is a possible version of team~$3$'s preference with ties broken.
Participant~$p_5$ has the preference $1_1\succ 1_2\succ 2_1\succ 3_1\succ 2_2\succ 3_2$, while all other participants have the preference $1_1\succ 2_1\succ 3_1\succ 1_2\succ 2_2\succ 3_2$.

One can check that the unique participant-justified EF matching in this auxiliary instance assigns participants $p_1,\dots,p_6$ to slots $1_1,2_1,3_1,1_2,2_2,3_2$, respectively, leading to the allocation $(\{p_1,p_4\}, \{p_2, p_5\}, \{p_3, p_6\})$ in the original instance.
The swap between $p_1$ and $p_5$ is a beneficial swap, since it makes all involved parties weakly better off and $p_5$ strictly better off.
After making this swap, we arrive at the allocation $(\{p_4,p_5\}, \{p_1, p_2\}, \{p_3, p_6\})$.
However, in this allocation, team~$3$ is not team-justified SD-EF1 toward team~$2$.
\end{example}

As another idea for achieving swap stability, instead of breaking ties from the original instance arbitrarily in the auxiliary instance, we may try keeping these ties intact.
We then find a participant-justified EF matching in the auxiliary instance which is optimal among all such matchings according to some optimality notion---a natural choice is PO.
The hope is that this could allow us to avoid the suboptimality caused by the arbitrariness in the tie-breaking.
However, as the following example demonstrates, this approach also fails to guarantee swap stability.

\begin{example}
Consider an instance with $n = 2$ teams and $m = 4$ participants.
Team~$1$ has the preference $p_1\succ p_2\succ p_3\succ p_4$, while team~$2$ has the preference $(p_2, p_3)\succ (p_1, p_4)$.
Participant~$p_1$ prefers team~$1$ to team~$2$, participant~$p_4$ prefers team~$2$ to team~$1$, while participants $p_2$ and $p_3$ are indifferent between both teams.

In the auxiliary instance, assume that we keep the preference with ties for copies of team~$2$.
Similarly, let $p_1$ have the preference $(1_1, 1_2)\succ (2_1, 2_2)$, $p_4$ have the preference $(2_1, 2_2)\succ (1_1, 1_2)$, and $p_2$ and $p_3$ have the preference $(1_1, 2_1)\succ (1_2, 2_2)$.\footnote{The example still works even if we let $p_1$ have the preference $1_1\succ 1_2\succ 2_1\succ 2_2$ and $p_4$ have the preference $2_1\succ 2_2\succ 1_1\succ 1_2$.}

Consider the matching that assigns $p_1$ to team-slot~$1_1$, $p_2$ to $2_1$, $p_3$ to $1_2$, and $p_4$ to $2_2$.
One can check that this matching is PO within the set of participant-justified EF matchings (in fact, it is PO even within the set of all matchings).
This translates to an allocation in the original instance that assigns $p_1, p_3$ to team~$1$ and $p_2, p_4$ to team~$2$.
However, the swap between $p_2$ and $p_3$ is a beneficial swap, so the allocation is not swap stable.
\end{example}

\section{Main Algorithm}
\label{sec:main-algo}

In this section, we establish the existence of an allocation that simultaneously fulfills all of the key properties we consider: team-justified SD-EF1, participant-justified EF, balancedness, and PO.

The algorithm is shown as \Cref{alg:main}.
At a high level, like \Cref{alg:auxiliary}, it creates team-slots whose values for participants are consistent with those of the corresponding teams.
It also maintains an ``eligibility set'' of teams for each participant, which starts by being empty.
At each step, the algorithm computes an eligible matching between team-slots and participants that maximizes the team-slots' values in round-robin order, where an unmatched team-slot is considered to receive value $-\infty$.
If there is more than one such matching, the algorithm further breaks ties by optimizing the participants' preferences lexicographically, from $p_1$ to $p_m$.
As long as the matching is incomplete, the algorithm expands the eligibility set of each unmatched participant by including the participant's most preferred team(s) that are still ineligible at that point.

\begin{algorithm}[t]
  \DontPrintSemicolon
  \SetAlgoLined
  $k \leftarrow \lfloor m/n\rfloor$ \;
  $r \leftarrow m - kn$ \;
  \lFor{$i \leftarrow 1$ \KwTo $r$}{
    Create $k+1$ slots $i_1, \dots, i_{k+1}$ of team~$i$.
  }
  \lFor{$i \leftarrow r+1$ \KwTo $n$}{
    Create $k$ slots $i_1, \dots, i_{k}$ of team~$i$.
  }
  \lFor{$j \leftarrow 1$ \KwTo $m$}{
    $E_j \leftarrow \emptyset$ \qquad \emph{\small // eligibility set of participant $p_j$}
  }
  Let each team-slot have arbitrary (real number) values for participants consistent with the preference of the corresponding team. \;
  $S \leftarrow$ empty matching between team-slots and participants \;
  \While{$S$ does not fully match all team-slots and participants \label{line:augment-eligibility}} {
    \For{$j \leftarrow 1$ \KwTo $m$}{
      \If{$p_j$ is unmatched in $S$}{
        Add to $E_j$ the most preferred team(s) of participant $p_j$ not already in $E_j$.
        (If there is more than one such team, add all of them.)
      }
    }
    $S \leftarrow$ a matching between team-slots and participants, where each participant $p_j$ can only be matched to team-slots corresponding to teams in $E_j$.
    Choose such a matching that maximizes the team-slots' values in the order $1_1, 2_1, \dots, n_1, 1_2, 2_2, \dots$; an unmatched team-slot is considered to receive value $-\infty$.
    If there is more than one such matching, choose one that optimizes the participants' preferences in the order $p_1,\dots, p_m$. \label{line:lexicographic-optimal} \;
  }  
  
    \Return the allocation where each team is assigned the participants that its slots receive in~$S$
    \caption{Main algorithm}
    \label{alg:main}
\end{algorithm}

Before analyzing \Cref{alg:main}, we present an example to illustrate how it works.

\begin{example}
Consider an instance with $n = 3$ teams and $m = 6$ participants.
Team~$1$ has the preference $(p_3, p_4)\succ (p_1, p_2, p_5, p_6)$, while teams~$2$ and $3$ are indifferent between all participants.
Participants $p_1,p_2$ have the preference $(1, 2)\succ 3$, participants $p_3,p_4$ have the preference $2\succ 3 \succ 1$, and participants $p_5,p_6$ have the preference $2\succ 1\succ 3$.
We consider the iterations of the while-loop in \Cref{line:augment-eligibility}.
\begin{itemize}
\item In the first iteration, $E_1 = E_2 = \{1,2\}$ and $E_3 = E_4 = E_5 = E_6 = \{2\}$.
A possible matching~$S$ assigns $1_1$ to $p_1$, $1_2$ to $p_2$, $2_1$ to $p_3$, $2_2$ to $p_4$, leaving $3_1$, $3_2$, $p_5$, and $p_6$ unassigned.
\item In the second iteration, $E_5$ and $E_6$ are expanded to $\{1,2\}$.
The same matching $S$ as before can still be chosen.
\item In the third iteration, $E_5$ and $E_6$ are expanded to $\{1,2,3\}$.
A possible matching $S$ assigns $1_1$ to $p_1$, $1_2$ to $p_2$, $2_1$ to $p_3$, $2_2$ to $p_4$, $3_1$ to $p_5$, and $3_2$ to $p_6$.
Since this matching is complete, the while-loop terminates.
\end{itemize}

The returned allocation is $(\{p_1, p_2\}, \{p_3, p_4\}, \{p_5, p_6\})$.
Note that this allocation is team-justified SD-EF1 (in particular, the envy of team~$1$ toward team~$2$ is not justified), participant-justified EF (in particular, the envy of $p_5,p_6$ toward $p_1,p_2,p_3,p_4$ is not justified), balanced, and PO.
\end{example}

We now establish the claimed properties of \Cref{alg:main}.

\begin{theorem}
\label{thm:main-properties}
\Cref{alg:main} returns an allocation that satisfies team-justified SD-EF1, participant-justified EF, balancedness, and PO (and therefore swap stability).  
\end{theorem}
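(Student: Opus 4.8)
The plan is to verify the four properties one at a time, using three facts about the state of the algorithm at termination. First, the loop terminates with a \emph{perfect} matching $S$ between the $m$ team-slots and the $m$ participants: in any iteration that does not terminate some participant $p$ is unmatched, and $p$ cannot already be eligible for every team (else all slots would be full and no participant could be unmatched), so $E_p$ strictly grows; since each $E_j$ is capped at $n$ teams, the loop must halt. Second, the returned $S$ is leximax-optimal for the vector of team-slot values read in the round-robin order $1_1,2_1,\dots,n_1,1_2,\dots$ (an unmatched slot contributing $-\infty$), and among all leximax-optimal matchings it lexicographically optimizes the participants' preferences in the order $p_1,\dots,p_m$. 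Third, each eligibility set is grown one indifference class at a time in order of preference, so $E_p$ is always a downward-closed prefix of $p$'s ranking; in particular $h\in E_p$ implies $h'\in E_p$ for every $h'$ with $h'\succsim_p h$. \emph{Balancedness} then follows at once, since team $i$ owns $\lfloor m/n\rfloor$ or $\lceil m/n\rceil$ slots, all of which are matched.

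For \emph{participant-justified EF}, I would argue by contradiction: suppose $p\in A_i$ and $p'\in A_j$ with $j\succ_p i$ and $p\succ_j p'$, sitting in slots $i_a$ and $j_b$ of $S$. Since $p$ occupies a slot of $i$ we have $i\in E_p$, hence $j\in E_p$ by the third fact. The natural move is to reassign $p$ to slot $j_b$ (legal since $j\in E_p$), evicting $p'$ and rematching nothing: this raises the value of slot $j_b$ from $u_j(p')$ to $u_j(p)$, lowers the value of slot $i_a$ to $-\infty$, and changes no other slot. If $j_b$ precedes $i_a$ in the round-robin order, then the first coordinate where the modified value vector differs from that of $S$ is $j_b$, where it is strictly larger, contradicting leximax-optimality of $S$. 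The remaining case, $i_a$ before $j_b$, is the crux, and I expect it to be the main obstacle: there one cannot reason from $S$ alone but must exploit the iterative dynamics, roughly by locating the iteration at which $j$ first entered $E_p$ (strictly earlier than the one at which $i$ did) and showing that from then on every slot of team $j$ is held by a participant whom $j$ values at least as much as $p$; applied to the slot holding $p'$, this contradicts $p\succ_j p'$. Making this ``persistence'' precise against the global leximax re-optimization as eligibility sets expand is the delicate part.

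For \emph{team-justified SD-EF1}, fix distinct teams $i,j$ with $c_i$ and $c_j$ slots. The key arithmetic observation is that slot $i_\ell$ always precedes slot $j_{\ell+1}$ in the round-robin order, because $(\ell-1)n+i<\ell n+j$ (as $i\le n<n+j$). I would pair $i_\ell$ with $j_{\ell+1}$ and let $q_\ell,r_{\ell+1}$ be the participants they hold. If $r_{\ell+1}$ strictly prefers $j$ to $i$ it does not count toward $i$'s justified envy. Otherwise $i\succsim_{r_{\ell+1}}j$, so $i\in E_{r_{\ell+1}}$ by the third fact; reassigning $r_{\ell+1}$ to slot $i_\ell$ (evicting $q_\ell$, rematching nothing) is legal and would raise the value of slot $i_\ell$ to $u_i(r_{\ell+1})$ and lower that of slot $j_{\ell+1}$ to $-\infty$, so since $i_\ell$ precedes $j_{\ell+1}$, leximax-optimality of $S$ forces $u_i(q_\ell)\ge u_i(r_{\ell+1})$, i.e.\ $q_\ell\succsim_i r_{\ell+1}$. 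Balancedness gives $c_j\le c_i+1$, so slots $j_2,\dots,j_{c_j}$ are all paired; sending each participant of $B_j$ that sits in a slot $j_b$ with $b\ge2$ to the participant $q_{b-1}$ in slot $i_{b-1}$ is then injective, lands in $A_i$, and maps each $r_b$ to a participant that $i$ weakly prefers to $r_b$ --- precisely a witness that $A_i\succsim_i^\sd(B_j\setminus X)$, with $X$ the at-most-one participant in slot $j_1$.

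For \emph{Pareto optimality}, suppose $A'$ Pareto-dominates the returned allocation $A$. Since $A'_i\succsim_i^\sd A_i$ forces $|A'_i|\ge|A_i|$ while the total sizes equal $m$ on both sides, all bundle sizes are preserved, and for each $i$ there is a bijection $\pi_i\colon A'_i\to A_i$ with $x\succsim_i\pi_i(x)$. I would rearrange $A'$ into a slot-matching $S'$ by placing $\pi_i^{-1}(q)$ in whichever slot of $i$ held $q$ in $S$; since every $x\in A'_i$ is weakly better off, $i$ is weakly preferred by $x$ to her team in $A$, hence $i\in E_x$ by the third fact, so $S'$ is legal. Every slot's value in $S'$ is at least its value in $S$, so $S'$ is also leximax-optimal; by optimality of $S$ all these inequalities are equalities, hence, values being consistent with the preferences, $x\sim_i\pi_i(x)$ throughout, so no team is strictly better off in $A'$ and the strictly improved party must be a participant. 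But then $S'$ gives every participant a weakly preferred team and some participant a strictly preferred one, so $S'$ beats $S$ in the participant-lexicographic tie-break, contradicting the second fact. Hence $A$ is PO, and in particular swap stable, since any beneficial swap would produce a Pareto improvement.
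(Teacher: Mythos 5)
Your treatment of balancedness and team-justified SD-EF1 is correct and essentially matches the paper's (the paper uses a swap of the two participants rather than an eviction, but both yield the same leximax contradiction, and your explicit pairing of slot $i_\ell$ with $j_{\ell+1}$ is exactly how the paper reduces team-justified SD-EF1 to a slot-level statement). Your Pareto-optimality argument is also correct and in fact takes a somewhat different, more self-contained route: the paper invokes Erdil--Ergin's characterization of PO via Pareto improvement cycles among participant-justified EF allocations, whereas you directly convert an arbitrary Pareto-dominating allocation $A'$ into an eligible slot-matching $S'$ (using that SD-dominance with conserved total size forces size-preserving bijections, and that eligibility sets are preference prefixes) and then contradict the two-stage lexicographic optimality of $S$. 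That is a legitimate and arguably cleaner alternative, since it does not presuppose participant-justified EF.

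The genuine gap is in participant-justified EF. In the case where slot $i_a$ precedes $j_b$ in the round-robin order, your argument hinges on a ``persistence'' claim: once $p$ is left unmatched at the iteration in which her eligibility set is expanded past team $j$, every slot of $j$ holds a participant whom $j$ values at least as much as $p$, \emph{and this remains true in all later iterations}. You correctly identify this as the crux and explicitly defer it (``making this precise \ldots is the delicate part''), but it is not something that follows from the final matching's optimality alone --- it is exactly Lemma~\ref{lem:successively-better} of the paper, which states that each team-slot's value is weakly increasing across iterations of the while-loop. Proving it is nontrivial: when eligibility sets expand, the leximax re-optimization can reshuffle many assignments, and the paper's proof constructs a chain of slots $x, x', x'', \dots, x^*$ along which a hypothetical value decrease would propagate, then reassigns backward along the chain to exhibit a matching that lexicographically beats the current one. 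Without this lemma (or an equivalent invariant), the step ``$x'$ must receive a participant that it values at least as much as $p$ in every later iteration'' is unsupported, so the participant-justified EF portion of your proof is incomplete.
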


\begin{proof}
First, observe that whenever the matching $S$ is incomplete, the algorithm expands the eligibility set of at least one participant by adding at least one team.
When all teams are eligible for all participants, a valid allocation is returned.
Hence, the algorithm is well-defined.
Balancedness follows from the fact that each team receives either $\lfloor m/n\rfloor$ or $\lceil m/n\rceil$ participants from the algorithm, and swap stability is an immediate consequence of PO.
Hence, we may focus on team-justified SD-EF1, participant-justified EF, and PO.
We establish each of these properties in turn.

\vspace{2mm}

\emph{Team-justified SD-EF1:}
Similarly to the proof of \Cref{thm:auxiliary}, it suffices to show that each team-slot does not have justified envy toward another team-slot that comes later in the round-robin order.
Suppose for contradiction that a team-slot $x$ is assigned participant $p$ but has justified envy toward a later team-slot $x'$ which is assigned participant $p'$.
This means that $x$ values $p'$ more than $p$, and moreover $p'$ weakly prefers $x$ to $x'$.
Since $x'$ is eligible for $p'$, so is $x$.
Then, the lexicographic optimization on the team-slot side (\Cref{line:lexicographic-optimal} of the algorithm) can be improved by swapping $p$ and $p'$, a contradiction.

\vspace{2mm}

\emph{Participant-justified EF:} We first prove the following lemma.

\begin{lemma}
\label{lem:successively-better}
Over the course of \Cref{alg:main}, each team-slot only receives weakly increasing value.
\end{lemma}

\begin{innerproof}[Proof of Lemma~\ref{lem:successively-better}]
Assume for contradiction that in some iteration of the while-loop in \Cref{line:augment-eligibility}, at least one team-slot receives lower value than before.
Let $x$ be the first such team-slot, and $X$ be the set of team-slots before $x$.
Since $x$ receives lower value, the participant $p$ assigned to $x$ in the previous iteration must be assigned to a team-slot $x'\in X$ in the current iteration.

Due to the choice of $x$, the team-slot $x'$ cannot receive lower value than in the previous iteration.
Also, $p$ was assigned in the previous iteration, so its eligibility set did not expand between the two iterations.
By the lexicographic optimality of the matching in the previous iteration, $x'$ cannot receive higher value in the current iteration than in the previous one.
Hence, $x'$ receives the same value in the two iterations.

If the participant $p'$ assigned to $x'$ in the previous iteration is assigned to some team-slot $x''\in X$ in the current iteration, we can apply a similar argument to show that $x''$ must receive the same value in the two iterations.
In particular, if $x''$ receives higher value in the current iteration, then we could improve the previous iteration by moving $p'$ from $x'$ to $x''$ and $p$ from $x$ to $x'$, a contradiction with the lexicographic optimality.

We continue this chain of argument until eventually, we arrive at a team-slot $x^*$ such that the participant $p^*$ assigned to $x^*$ in the previous iteration is not assigned to a team-slot in $X$ in the current iteration.
Suppose that the chain of team-slots that we encounter is $x, x', x'', \dots, x^*$.
Modify the matching in the current iteration by reassigning backward along this chain---that is, move $p'$ from $x'$ to $x$, $p''$ from $x''$ to $x'$, and so on; moreover, assign $p^*$ to $x^*$.
Each team-slot in $X$ receives the same value as before, while $x$ receives higher value.
This contradicts the lexicographic optimality of the current iteration.
\end{innerproof}

We now return to proving participant-justified EF.
Assume for contradiction that participant~$p$ assigned to team-slot $x$ has justified envy toward participant~$p'$ assigned to team-slot $x'$.
In particular, $p$ strictly prefers $x'$ to $x$.
Hence, in some earlier iteration of the while-loop, the team corresponding to $x'$ was a least-preferred team in the eligibility set of $p$, and the eligibility set was expanded.
This means that $p$ was unmatched in that iteration, so $x'$ received a participant that it values at least as much as $p$.
By Lemma~\ref{lem:successively-better}, $x'$ must receive a participant that it values at least as much as $p$ in every later iteration.
However, $x'$ later receives $p'$, which it values less than~$p$---otherwise the envy of $p$ would not be justified.
This yields the desired contradiction.

\vspace{2mm}

\emph{PO:} To establish PO, we will use the concept of a Pareto improvement cycle due to \citet{ErdilEr17}.
Given an allocation, a cycle of participants $p_{j_1}, p_{j_2}, \dots, p_{j_k} \equiv p_{j_0}$ for some $k\ge 2$ is said to be a \emph{Pareto improvement cycle} if it has the property that for each $t\in \{0, 1, \dots, k-1\}$, participant $p_{j_t}$ weakly prefers participant $p_{j_{t+1}}$'s team to her own team and participant $p_{j_{t+1}}$'s team weakly prefers $p_{j_t}$ to $p_{j_{t+1}}$, and at least one of these preferences is strict for some $t$.
Theorem~1 of Erdil and Ergin implies the following characterization of participant-justified EF and PO allocations in our setting.

\begin{lemma}[\citep{ErdilEr17}]
\label{lem:erdil-ergin}
A participant-justified EF allocation satisfies PO if and only if it does not admit a Pareto improvement cycle.
\end{lemma}

Since the allocation returned by \Cref{alg:main} is participant-justified EF, by Lemma~\ref{lem:erdil-ergin}, it suffices to show that the final matching $S$ does not admit a Pareto improvement cycle.
This is indeed the case: A Pareto improvement cycle would imply that there exists a different allocation such that every party (i.e., team or participant) is weakly better off relative to~$S$ and at least one party is strictly better off.
This would mean that $S$ can be lexicographically improved on either the team-slot side or the participant side, contradicting the definition of~$S$.
Hence, the returned allocation satisfies PO.
\end{proof}

Besides fulfilling several desirable properties, our algorithm can also be implemented efficiently.

\begin{proposition}
\Cref{alg:main} can be implemented in polynomial time.    
\end{proposition}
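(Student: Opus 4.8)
The plan is to isolate two sources of cost: the number of executions of the while-loop in \Cref{line:augment-eligibility}, and the cost of a single execution, which is dominated by the matching computation in \Cref{line:lexicographic-optimal}. Everything preceding the while-loop is trivially polynomial: there are $m$ team-slots, and for each we may fix a valuation consistent with the corresponding team's preference in $O(m)$ time (for instance, assign each participant a value equal to the number of participants strictly worse in that team's ranking), for $O(m^2)$ work in total, and initializing the empty eligibility sets costs $O(mn)$.

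For the iteration count I would reuse the observation from the proof of \Cref{thm:main-properties}: whenever $S$ is incomplete at the top of the loop there is an unmatched team-slot, and since otherwise adding the edge from that slot to an unmatched eligible participant would lexicographically improve $S$, no unmatched participant is eligible for that slot's team; hence the eligibility set of \emph{every} unmatched participant is strictly enlarged in that iteration. Each $E_j$ grows monotonically and is enlarged by exactly one indifference class of $\succsim_{p_j}$ at a time, so it can be enlarged at most $n$ times; summing over participants, there are at most $mn$ enlargements over the whole run, and since every executed iteration performs at least one, the while-loop runs at most $mn+1$ times.

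The heart of the argument, and the step I expect to be the main obstacle, is showing that the matching in \Cref{line:lexicographic-optimal} can be computed in polynomial time, since it is defined by a two-level lexicographic optimization over bipartite matchings: first lexicographically maximize the team-slots' values in round-robin slot order $1_1,2_1,\dots,n_1,1_2,\dots$, and then, among all optimal matchings, lexicographically optimize the participants' preferences in the order $p_1,\dots,p_m$. I would handle each level by the standard greedy scheme for lexicographic matching. For the team side, process the slots $y_1=1_1,y_2=2_1,\dots$ in order, maintaining for each already-processed slot a target value that its matched participant must meet; for the current slot $y_t$, scan its at most $m+1$ possible values (the values it assigns to participants eligible for its team, together with $-\infty$) in decreasing order, and for each candidate value test—by one bipartite maximum-matching computation via augmenting paths—whether there is an eligible matching saturating $y_1,\dots,y_t$ in which each $y_s$ ($s\le t$) is matched to a participant meeting its recorded/candidate target; record the largest feasible value as $y_t$'s target. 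Each such feasibility test costs $O(m^3)$, there are $O(m)$ candidates per slot and $O(m)$ slots, so this level is polynomial; the participant side is handled identically, now scanning each participant's at most $n$ indifference classes (with ``unmatched'' as the least preferred option) and testing feasibility subject to both the fixed team-slot targets and the already-fixed participants' target classes. If these two levels still leave $S$ undetermined, any fixed tie-breaking rule (say, preferring lower-indexed team-slots) completes the specification without changing the returned allocation and is implemented the same way.

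Putting the pieces together, each of the $O(mn)$ while-loop iterations performs a polynomial amount of work and the setup is polynomial, so \Cref{alg:main} runs in polynomial time. The only slightly delicate point is the correctness of the per-level enumeration, namely that greedily fixing the largest feasible value for each party in turn yields the lexicographically optimal matching; this follows from the usual exchange argument, since a matching beating the greedy one would agree with the recorded targets up to some party and then assign that party a strictly larger value, contradicting that the recorded target was the largest feasible one given the earlier choices.
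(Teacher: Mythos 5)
Your proposal is correct and follows essentially the same route as the paper's proof: bound the number of while-loop iterations by $O(mn)$ via the monotone growth of the eligibility sets, and compute the lexicographically optimal matching greedily, fixing each team-slot's (and then each participant's) value in round-robin order by repeated bipartite maximum-matching feasibility tests. The extra details you supply (the exchange argument for the greedy's correctness and the observation that every unmatched participant's eligibility set strictly grows) are consistent with, and slightly more explicit than, the paper's argument.
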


\begin{proof}
Consider the while-loop in \Cref{line:augment-eligibility}, and note that at least one team is added to a participant's eligibility set in each iteration.
Hence, the while-loop runs for at most $O(nm)$ iterations.
In each iteration, to compute a lexicographically optimal matching $S$, we first determine the value that each team-slot receives according to the round-robin order.

Suppose that we are currently determining the value for team-slot $x$.
We construct an unweighted bipartite graph with team-slots on one side and participants on the other side.
For each team-slot before $x$, we add edges only to eligible participants that yield the determined value for the team-slot.
We then add edges from $x$ to eligible participants that yield the highest value for it, and compute a maximum matching in this graph.
If the maximum matching assigns a participant to all teams up to (and including) $x$, we fix the value of $x$ to be its value for the assigned participant, and proceed to the next team-slot.
Otherwise, we remove all current edges from $x$, and instead add edges from $x$ to eligible participants that yield the next highest value for it.
If this procedure exhausts all eligible participants for $x$ without assigning a participant to $x$, then the value of $x$ is determined to be $-\infty$.
That is, $x$ is unmatched in the current iteration, and we proceed to the team-slot after $x$.

With all values for team-slots determined, we only keep edges to eligible participants that yield the determined values.
We then proceed to optimize on the participant side in a similar manner.
Since computing a maximum matching can be done in polynomial time, and the numbers of teams and participants are polynomial, the algorithm can be implemented in polynomial time.
\end{proof}

We consider \Cref{alg:main} in two important special cases.
\begin{itemize}
\item Suppose that all participants are indifferent between all teams, as in the canonical (one-sided) fair division setting.
Then, every team is eligible for every participant from the beginning, and \Cref{alg:main} returns an allocation output by the round-robin algorithm.
\item Suppose that all teams and participants have strict preferences.
We will show that the output of \Cref{alg:main} coincides with that of the (participant-proposing) Gale--Shapley algorithm.
First, we claim that in each iteration of the while-loop, each participant can only be matched to her least-preferred team within her eligibility set.
Indeed, whenever a participant $p$ expands her eligibility set, she is unmatched, which means that the least-preferred team $i$ before the expansion is already matched to a participant that it prefers to $p$ (otherwise, by \Cref{line:lexicographic-optimal} of the algorithm, $i$ should be matched to $p$ instead).
By Lemma~\ref{lem:successively-better}, $i$ can never be matched to~$p$ later, so $p$ can only be matched to her least-preferred team after the expansion.
This implies that in each iteration of the while-loop, no two teams compete for the same participant and the round-robin ordering of team-slots is irrelevant.
In other words, each participant proposes to her least-preferred eligible team, each team chooses its most-preferred participants up to its number of slots, and each unmatched participant expands its eligibility set by adding her next-preferred team.
This is exactly what the Gale--Shapley algorithm does.
\end{itemize}

Hence, \Cref{alg:main} generalizes both round-robin and Gale--Shapley at once, thereby unifying fundamental algorithms from both fair division and two-sided matching.

We remark that in addition to team-justified SD-EF1, participant-justified EF, and balancedness, one cannot hope to also obtain PO for participants subject to the latter two properties.\footnote{This is similar to an optimality notion considered by \citet{ErdilEr17}.}
Indeed, consider an instance with $n = 2$ teams and $m = 4$ participants.
Each team has the preference $(p_1,p_2)\succ (p_3,p_4)$, participants $p_1,p_2$ are indifferent between both teams, while participants $p_3,p_4$ prefer team~$2$ to team~$1$.
The only allocation that is PO for participants subject to participant-justified EF and balancedness assigns $p_1,p_2$ to team~$1$ and $p_3,p_4$ to team~$2$.
However, this allocation is not team-justified SD-EF1.

\section{Group-Strategyproofness for Participants}
\label{sec:SP}

We next show the resemblance between \Cref{alg:main} and Gale--Shapley in that each participant has no incentive to misreport her preference, and there does not even exist a group of participants with an incentive to collude and report falsely.
That is, \Cref{alg:main} is group-strategyproof for participants.

For this result, we will need an additional tie-breaking criterion when choosing a lexicographically optimal matching $S$ in \Cref{line:lexicographic-optimal}.
Namely, denoting by $L$ the set of all team-slots, we create an arbitrary ordering of all $m^2$ elements in $P\times L$.
If there is more than one lexicographically optimal matching according to the criteria of \Cref{alg:main}, we choose one that is lexicographically optimal with respect to the order over $P\times L$---note that this yields a unique matching.
We refer to the resulting algorithm as \Cref{alg:main} with ``augmented tie-breaking''.

\begin{definition}
An algorithm that returns an allocation for any instance is 
{\em group-strategyproof (for participants)} if no group of participants can collude to misreport their preferences in a way that makes every of its members strictly better off.

Formally, let $I$ be any instance, $X$ be any non-empty subset of participants in $I$, and $I'$ be an instance obtained from $I$ by replacing the preference list $\succsim_p$ of each participant $p\in X$ with another (possibly identical) list $\succsim'_p$. Let $A$ and $A'$ be the allocations returned by the algorithm for inputs $I$ and $I'$, respectively. Then, for a group-strategyproof algorithm, it must hold that at least one participant $p\in X$ is assigned to a weakly better team in $A$ than in $A'$ with respect to her original preference $\succsim_p$.
\end{definition}

An algorithm is simply called {\em strategyproof (for participants)} if we consider only a singleton set as the set $X$ in the above definition. Clearly, group-strategyproofness is a stronger condition than strategyproofness.

\begin{theorem}\label{thm:SP}
\Cref{alg:main} with augmented tie-breaking is group-strategyproof for participants.
\end{theorem}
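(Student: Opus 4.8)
The plan is to exploit the tight analogy between \Cref{alg:main} (with augmented tie-breaking) and the participant-proposing Gale--Shapley algorithm, and to mimic the classical proof of group-strategyproofness of Gale--Shapley for participants. The key conceptual tool is a notion of ``improvement'' along the run of the algorithm: I would track, for each participant, the sequence of teams that her eligibility set acquires over the iterations of the while-loop, and for each team-slot the sequence of participants (equivalently, values) it holds. By \Cref{lem:successively-better}, each team-slot's value only goes up over the course of the algorithm, so the final matching is, in a precise sense, the ``worst for team-slots / best for participants'' among a natural family of matchings; I would first make this statement precise by characterizing the output matching $S$ as the unique matching that is lexicographically optimal for the participants (then for $P\times L$) within the set of matchings that are, for every team-slot, at least as good as the ``threshold'' values reached during the run. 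This lattice-flavored characterization is the analogue of the participant-optimality of the Gale--Shapley outcome.

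Next I would set up the strategyproofness argument proper. Let $I$ be the truthful instance, $X$ a coalition, and $I'$ the instance where each $p\in X$ reports $\succsim'_p$; let $A,A'$ be the two outputs and suppose for contradiction that every $p\in X$ strictly prefers (w.r.t.\ her true $\succsim_p$) her team in $A'$ to her team in $A$. The first step is to reduce to a canonical form of deviation: since each deviator ends up, under $I'$, at some team $j(p)=$ the team she gets in $A'$, I would argue (as in the Dubins--Freedman / Roth argument) that the coalition can be assumed without loss of generality to report a preference list in which $j(p)$ is ranked uniquely first (a participant who is matched to her reported top choice in a Gale--Shapley-type process is matched to it regardless of the rest of her list, and our algorithm shares this property because a participant is never matched to a team outside her eligibility set and her eligibility set grows from the top of her reported list). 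This lets me assume each deviator simply ``demands'' a fixed team $j(p)$.

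Then I would run the two executions side by side and derive a contradiction by a potential/monotonicity argument on team-slot values. Informally: in $I'$ the deviators get strictly better teams, so some non-deviator, or some slot, must be ``displaced'' relative to $I$; chasing this displacement produces a chain, and since each team-slot's value is weakly monotone along a run (\Cref{lem:successively-better}) and the deviators' true preferences make their $A$-teams weakly worse than their $A'$-teams, following the chain around yields a cycle of strict improvements among the \emph{non-deviating} parties together with the team-slots — i.e., effectively a Pareto improvement cycle for the truthful instance $I$ in the matching $S$ produced for $I$ (or a violation of the lexicographic optimality defining $S$ on the participant side). This contradicts the characterization of $S$ via \Cref{lem:erdil-ergin} and \Cref{line:lexicographic-optimal}, and in particular uses the augmented $P\times L$ tie-breaking to rule out the degenerate ``cycle with no strict edge'' case, pinning down a unique matching so that the chain cannot close up trivially.

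The main obstacle is the presence of ties and the fact that, unlike in the strict-preference case, a single iteration of our while-loop is not simply ``everyone proposes to their current worst eligible team'': the round-robin maximization over team-slots plus the lexicographic tie-breaking can reshuffle many participants at once, so the clean ``one proposal at a time'' bookkeeping of the textbook Gale--Shapley proof is unavailable. Handling this is exactly where \Cref{lem:successively-better} and the augmented tie-breaking do the heavy lifting: the former guarantees the monotonicity that lets the displacement chain be well-defined and non-repeating, and the latter restores uniqueness of the output so that the usual ``if the coalition gains, some structure must strictly improve, contradicting optimality'' dichotomy goes through. I expect the bulk of the write-up to be a careful induction over while-loop iterations showing that the run on $I'$, restricted to the non-deviators and to slots, is dominated iteration-by-iteration by the run on $I$ in the appropriate order, from which the contradiction with the true preferences of the coalition members falls out.
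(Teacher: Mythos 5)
Your plan has two genuine gaps, and both sit exactly where the real work of this theorem lies. First, the reduction to a ``canonical'' deviation in which each member of the coalition ranks her target team $j(p)$ uniquely first rests on the claim that a participant matched to her reported top choice would be matched to it regardless of the rest of her list. For \Cref{alg:main} this is an unproven invariance property, and it is not obviously true: truncating a deviator's list changes the schedule by which her eligibility set grows, which changes the feasible matchings at every intermediate iteration, which (through the round-robin slot optimization and the lexicographic tie-breaking over ties) can reshuffle what every other party receives. The paper explicitly notes that it avoids this style of argument --- the Dubins--Freedman reliance on proposal-order independence and the Roth reliance on preference transformations --- precisely because those properties are not available here; instead it gives a direct proof. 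Second, your ``displacement chain'' ending in ``effectively a Pareto improvement cycle for $I$'' is not developed enough to close. With ties, the chain consists of long stretches of mutual indifference, and one needs a precise object --- the paper's notion of a \emph{blocking path}, a chain of indifferences terminating in a strict improvement, an unmatched slot, or an index comparison --- together with the invariant (\Cref{prop:noblocking}) that \emph{every intermediate matching} computed by the algorithm admits no such path and matches each participant to a least-preferred eligible team. The contradiction in the paper is then obtained not against the final matching of $I$ (so not via \Cref{lem:erdil-ergin}), but by comparing the final matching $S'$ of the manipulated instance against $S^*$, the matching at the \emph{last iteration of the run on $I$ in which some manipulator was still unmatched}, extracting from their symmetric difference a pair $(a,b)$ and showing that one of the two runs admits a blocking path (or that the two runs rank two slots inconsistently under the round-robin priority). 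Your sketch does not identify the intermediate matching $S^*$, the blocking-path invariant, or the case analysis on $(a,b)$, and without them the ``chain closes into a contradiction'' step does not go through.

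Separately, your opening claim that $S$ can be characterized as the lexicographically participant-optimal matching subject to team-slot ``thresholds'' is asserted without proof and is not established (or needed) in the paper; \Cref{lem:successively-better} gives only the one-sided monotonicity of slot values within a single run, which is far weaker than a lattice-type optimality characterization. As written, the proposal is a plausible research plan whose two load-bearing steps are each missing the machinery required to execute them.
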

Before proving this theorem, we need to prepare several terms and observations.
Recall that in \Cref{alg:main}, an allocation is computed as a one-to-one matching between participants and team-slots, i.e., it is a subset of $P\times L$. 
We regard each participant $p$'s preference as a preference over team-slots; if $p$ strictly prefers team $i$ to team $j$ in the input preference, we consider that $p$ strictly prefers any of $i$'s slots to any of $j$'s slots. 
Slots corresponding to equally-preferred teams are equally preferred. 
Additionally, we assume that each slot has a preference list identical to that of the corresponding team. 
For a matching $S\subseteq P\times L$ and a participant $p$ (resp., a slot $x$) matched in $S$, we denote by $S(p)\in L$ (resp., $S(x)\in P$) the assigned partner in $S$.

Given a subset $E_j$ of $T$ for each participant $p_j\in P$, we say that a pair $(p_j,x)\in P\times L$ is {\em eligible} for $\{E_j\}_{j\in[m]}$ if the slot $x$ corresponds to a team in $E_j$. For a matching $S\subseteq P\times L$ consisting of eligible pairs, a sequence $R=(p_{j_1},x_1,p_{j_2},x_2, \dots,p_{j_h},x_h)$ with $h\geq 1$ is a {\em blocking path for $S$} ({\em within $\{E_j\}_{j\in[m]}$})  if the following properties are satisfied (see \Cref{fig:R}):
\begin{itemize}
\item All parties in the sequence are distinct. Moreover, $(p_{j_\ell},x_\ell)\not\in S$  for each $\ell\in\{1,2,\dots,h\}$ and  $(p_{j_\ell},x_{\ell-1})\in S$ for each $\ell\in\{2,3,\dots, h\}$. 
\item The first participant $p_{j_1}$ is either
\begin{enumerate}[(i)]
\setlength{\leftskip}{5mm}
\item unmatched in $S$ while $(p_{j_1}, x_1)$ is eligible for $\{E_j\}_{j\in[m]}$, or
\item matched in $S$ and strictly prefers $x_1$ to $S(p_{j_1})$.
\end{enumerate}
\item  For each $\ell\in\{1,2,\dots,h-1\}$, the slot $x_\ell$ prefers $p_{j_\ell}$ and $p_{j_{\ell+1}}$ equally. For each  $\ell\in\{2,\dots,h\}$, the participant $p_{j_{\ell}}$  prefers $x_{\ell-1}$ and $x_{\ell}$ equally.
\item The last slot $x_h$ is either
\begin{enumerate}[(i)]
\setlength{\leftskip}{5mm}
\item unmatched in $S$, 
\item matched in $S$ and strictly prefers   $p_{j_h}$ to $S(x_h)$, or
\item matched in $S$ and prefers $p_{j_h}$ and $S(x_h)$ equally, and moreover the participant $S(x_h)$ has an index larger than $j_1$.
\end{enumerate}
\end{itemize}
Note that for any eligibility sets $\{E_j\}_{j\in[m]}$ defined in \Cref{alg:main}, since $S$ consists only of eligible pairs and $p_{j_\ell}$ prefers $x_{\ell-1}$ and $x_\ell$ equally for $\ell \in \{2,\dots,h\}$, the edges $(p_{j_\ell}, x_\ell)$ are eligible for all $\ell \in \{2,\dots,h\}$.
Moreover, the second bullet point ensures that $(p_{j_1}, x_1)$ is eligible as well.
Observe also that in the case without ties, a blocking path reduces to an eligible pair that is a blocking pair in the classical sense.

\begin{figure}[htbp]
\begin{minipage}[t]{.58\textwidth}
\begin{minipage}[t]{.6\textwidth}
\centering
\begin{tikzpicture}[xscale=1.5,yscale=.8,very thick,font=\small]
\foreach \y [count=\i] in {9, 8, ..., 0}
{
    \coordinate(p\i) at (0,\y);
    \coordinate(s\i) at (1,\y);
}
\foreach \x / \y in {p2/s2, p3/s3, p4/s4, p5/s5, p6/s6}
{
    \draw[myred] (\x) -- (\y);
}

\draw[myred,densely dashed] (p7) -- (s7);

\foreach \x / \y in {p1/s2, p2/s3, p3/s4, p4/s5, p5/s6, p6/s7}
{
    \draw[mygray] (\x) -- (\y);
}
\foreach \a / \l in {p2/\sim, p3/\sim, p4/\sim, p5/\sim, p6/\sim}
{
    \node[font=\tiny,rotate=-120] at ($(\a)+(0.3,-.13)$) {$\boldsymbol{\l}$};
}
\foreach \a / \l in {s2/\sim, s3/\sim, s4/\sim, s5/\sim, s6/\sim, s7/\precsim}
{
    \node[font=\tiny,rotate=60] at ($(\a)+(-0.3,.13)$) {$\boldsymbol{\l}$};
}
\foreach \a in {p2, p3, p4, p5, s3, s4, s5, s6}
{
    \node[dot] at (\a) {};
}
\node[dot=6pt,fill=myred,draw=myred] at (p1) {};
\node[dot,label=left:{\scriptsize$p^*=p_{j_1}$}] at (p1) {};
\node[dot,label=left:{\scriptsize$\textcolor{myred}{S}(x_1)=p_{j_2}$}] at (p2) {};
\node[dot,label=left:{\scriptsize$\textcolor{myred}{S}(x_2)=p_{j_3}$}] at (p3) {};
\node[dot,label=left:{\scriptsize$\textcolor{myred}{S}(x_{h-1})=p_{j_h}$}] at (p6) {};
\node[dot,label=right:{\scriptsize$x_1$}] at (s2) {};
\node[dot,label=right:{\scriptsize$x_2$}] at (s3) {};
\node[dot,label=right:{\scriptsize$x_{h-1}$}] at (s6) {};
\node[dot,label=right:{\scriptsize$x_h$}] at (s7) {};

\begin{scope}[transform canvas={yshift=2pt},transparency group,opacity=.9]
    \draw[path] (p1) -- (s2) -- (p2)  -- (s3) -- (p3) -- (s4) -- (p4) -- (s5) -- (p5) -- (s6) -- (p6) -- (s7);
    \draw[overpath] (p1) -- (s2) -- (p2)  -- (s3) -- (p3) -- (s4) -- (p4) -- (s5) -- (p5) -- (s6) -- (p6) -- (s7);
\end{scope}
\node[mygreen,font=\small] at (0.5,9) {$R$};
\end{tikzpicture}
\subcaption{$p_{j_1}$ is unmatched in $S$}\label{subfig:Runmatch}
\end{minipage}%
\begin{minipage}[t]{.4\textwidth}
\centering
\begin{tikzpicture}[xscale=1.5,yscale=.8,very thick,font=\small]
\foreach \y [count=\i] in {9, 8, ..., 0}
{
    \coordinate(p\i) at (0,\y);
    \coordinate(s\i) at (1,\y);
}
\foreach \x / \y in {p1/s1, p2/s2, p3/s3, p4/s4, p5/s5, p6/s6}
{
    \draw[myred] (\x) -- (\y);
}

\draw[myred,densely dashed] (p7) -- (s7);

\foreach \x / \y in {p1/s2, p2/s3, p3/s4, p4/s5, p5/s6, p6/s7}
{
    \draw[mygray] (\x) -- (\y);
}
\foreach \a / \l in {p1/\prec, p2/\sim, p3/\sim, p4/\sim, p5/\sim, p6/\sim}
{
    \node[font=\tiny,rotate=-120] at ($(\a)+(0.3,-.13)$) {$\boldsymbol{\l}$};
}
\foreach \a / \l in {s2/\sim, s3/\sim, s4/\sim, s5/\sim, s6/\sim, s7/\precsim}
{
    \node[font=\tiny,rotate=60] at ($(\a)+(-0.3,.13)$) {$\boldsymbol{\l}$};
}
\foreach \a in {p1, p2, p3, p4, p5, p6, s2, s3, s4, s5, s6}
{
    \node[dot] at (\a) {};
}

\node[dot,label=left:{\scriptsize$p^*=p_{j_1}$}] at (p1) {};
\node[dot,label=left:{\scriptsize$p_{j_2}$}] at (p2) {};
\node[dot,label=left:{\scriptsize$p_{j_3}$}] at (p3) {};
\node[dot,label=left:{\scriptsize$p_{j_h}$}] at (p6) {};
\node[dot,label=right:{\scriptsize$\textcolor{myred}{S}(p_{j_1})$}] at (s1) {};
\node[dot,label=right:{\scriptsize$x_1$}] at (s2) {};
\node[dot,label=right:{\scriptsize$x_2$}] at (s3) {};
\node[dot,label=right:{\scriptsize$x_{h-1}$}] at (s6) {};
\node[dot,label=right:{\scriptsize$x_h$}] at (s7) {};

\begin{scope}[transform canvas={yshift=2pt},transparency group,opacity=.9]
    \draw[path] (p1) -- (s2) -- (p2)  -- (s3) -- (p3) -- (s4) -- (p4) -- (s5) -- (p5) -- (s6) -- (p6) -- (s7);
    \draw[overpath] (p1) -- (s2) -- (p2)  -- (s3) -- (p3) -- (s4) -- (p4) -- (s5) -- (p5) -- (s6) -- (p6) -- (s7);
\end{scope}
\node[mygreen,font=\small] at (0.8,8.7) {$R$};
\end{tikzpicture}
\subcaption{$p_{j_1}$ is matched in $S$}\label{subfig:Rmatch}
\end{minipage}%
\caption{Blocking path $R$}\label{fig:R}
\end{minipage}%
\hfill
\begin{minipage}[t]{.33\textwidth}
\centering
\begin{tikzpicture}[xscale=1.5,yscale=.8,very thick,font=\small]
\foreach \y [count=\i] in {9, 8, ..., 0}
{
    \coordinate(p\i) at (0,\y);
    \coordinate(s\i) at (1,\y);
}
\foreach \x / \y in {p1/s1, p2/s2, p3/s3, p4/s4, p5/s5, p6/s6}
{
    \draw[myred] ($(\x)+(0,.03)$) -- ($(\y)+(0,.03)$);
}
\draw[myred,densely dashed] ($(p7)+(0,.04)$) -- ($(s7)+(0,.04)$);

\foreach \x / \y in {p2/s2, p3/s3, p4/s4, p5/s5, p6/s6}
{
    \draw[myblue] ($(\x)+(0,-.04)$) -- ($(\y)+(0,-.04)$);
}
\draw[myblue,densely dashed] ($(p7)+(0,-.04)$) -- ($(s7)+(0,-.04)$);

\foreach \x / \y in {p1/s2, p2/s3, p3/s4, p4/s5, p5/s6, p6/s7}
{
    \draw[mygray] (\x) -- (\y);
}

\foreach \a / \l in {p1/\prec, p2/\sim, p3/\sim, p4/\sim, p5/\sim, p6/\sim}
{
    \node[font=\tiny,rotate=-120] at ($(\a)+(0.3,-.13)$) {$\boldsymbol{\l}$};
}
\foreach \a / \l in {s2/\sim, s3/\sim, s4/\sim, s5/\sim, s6/\sim, s7/\precsim}
{
    \node[font=\tiny,rotate=60] at ($(\a)+(-0.3,.13)$) {$\boldsymbol{\l}$};
}

\foreach \a in {p2, p3, p4, p5, s3, s4, s5, s6}
{
    \node[dot] at (\a) {};
}
\node[dot,label=left:{\scriptsize$p^*=p_{j_1}$}] at (p1) {};
\node[dot,label=left:{\scriptsize$p_{j_2}$}] at (p2) {};
\node[dot,label=left:{\scriptsize$p_{j_3}$}] at (p3) {};
\node[dot,label=left:{\scriptsize$p_{j_h}$}] at (p6) {};
\node[dot,label=right:{\scriptsize$\textcolor{myred}{S}(p_{j_1})$}] at (s1) {};
\node[dot,label=right:{\scriptsize$x_1$}] at (s2) {};
\node[dot,label=right:{\scriptsize$x_2$}] at (s3) {};
\node[dot,label=right:{\scriptsize$x_{h-1}$}] at (s6) {};
\node[dot,label=right:{\scriptsize$x_h$}] at (s7) {};

\begin{scope}[transform canvas={yshift=2pt},transparency group,opacity=.9]
    \draw[path] (p1) -- (s2) -- (p2)  -- (s3) -- (p3) -- (s4) -- (p4) -- (s5) -- (p5) -- (s6) -- (p6) -- (s7);
    \draw[overpath] (p1) -- (s2) -- (p2)  -- (s3) -- (p3) -- (s4) -- (p4) -- (s5) -- (p5) -- (s6) -- (p6) -- (s7);
\end{scope}
\node[mygreen,font=\small] at (0.8,8.7) {$R$};
\end{tikzpicture}
\caption{$S'(x_i)=S(x_i)~(i=1,\dots,h)$}\label{fig:S'=S}
\end{minipage}
\end{figure}

For the proof of group-strategyproofness, we show the following properties of \Cref{alg:main}.
\begin{proposition}
\label{prop:noblocking}
Consider \Cref{alg:main} with augmented tie-breaking.
For each iteration of the while-loop in \Cref{line:augment-eligibility}, 
\begin{itemize}
\item[--] the computed matching $S$ admits no blocking path within the eligibility sets $\{E_j\}_{j\in [m]}$, and
\item[--] any participant that is matched in $S$ is matched to a slot corresponding to a least-preferred team in her eligibility set.
\end{itemize}
\end{proposition}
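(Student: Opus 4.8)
The plan is to prove the proposition by induction on the index $t$ of the while-loop iteration, establishing the two bullets together: within iteration $t$ I would derive the second bullet first, then the first bullet, using the second bullet of iteration $t$ and both bullets of all iterations $<t$, together with Lemma~\ref{lem:successively-better}. Before the induction I would record two structural facts. First, at every point each $E_j$ is a union of the top indifference classes of $\succsim_{p_j}$, since a participant only ever adds an entire currently-best ineligible class, and only while unmatched. Second, the \emph{bottom} class of $E_j$ at iteration $t$ was introduced at some iteration $t^\ast\le t$ at which $p_j$ was unmatched in $S^{(t^\ast-1)}$, and $p_j$ is matched in every iteration from $t^\ast$ through $t$ (otherwise her eligibility set would have expanded again). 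Combining the second fact with the inductive ``no blocking path'' hypothesis and Lemma~\ref{lem:successively-better}, every slot belonging to a team in $E_j$ other than the bottom class was, already at iteration $t^\ast-1$, matched to a participant it values weakly above $p_j$ — an unmatched or strictly-worse partner would be a length-one blocking path through the unmatched $p_j$ — and hence still is at iteration~$t$.

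For the second bullet (a matched participant sits at a least-preferred eligible team), I would take $p_j$ matched in $S^{(t)}$ to a slot $x$ whose team is strictly above the bottom class of $E_j$ and derive a contradiction. By the observation above, $x$ values its partner weakly above $p_j$, so the only way $S^{(t)}(x)=p_j$ is that $x$ is \emph{indifferent} between $p_j$ and the participant it held at iteration $t^\ast-1$ (which, again by the no-blocking-path hypothesis, must then have index smaller than $p_j$'s). A closer look using the participant-side lexicographic optimality together with the augmented $P\times L$ tie-break — essentially that a lower-indexed participant has a stronger claim to such a tied slot, so $p_j$ cannot in fact sit there while a worse eligible slot of hers remains under-used — rules this out. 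This tie-handling is the first delicate point.

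For the first bullet (no blocking path within $\{E_j\}$) I would argue by contradiction from a blocking path $R=(p_{j_1},x_1,p_{j_2},x_2,\dots,p_{j_h},x_h)$ for $S:=S^{(t)}$ and consider the matching $S'$ obtained by \emph{rotating along} $R$: set $S'(x_\ell)=p_{j_\ell}$ for $\ell=1,\dots,h$, and un-match $S(x_h)$ (resp.\ $S(p_{j_1})$) if it was matched. The equal-preference conditions built into the definition of a blocking path guarantee that every team-slot other than $x_h$ — and, when $p_{j_1}$ is unmatched, that is all of them — keeps exactly the same value in $S'$ as in $S$. A case split on the status of $x_h$ then finishes the case where $p_{j_1}$ is unmatched: if $x_h$ is unmatched or strictly prefers $p_{j_h}$ to $S(x_h)$, then $x_h$'s value strictly increases and $S'$ beats $S$ in the round-robin team-side lexicographic order; if instead $x_h$ is indifferent with $S(x_h)$ of index larger than $j_1$, then all team-slot values are preserved, $p_{j_1}$ strictly improves, and the only worse-off participant $S(x_h)$ has a larger index, so $S'$ beats $S$ on the participant side with the team side tied. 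Either way this contradicts the canonical choice of $S^{(t)}$.

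The main obstacle is the remaining case of the first bullet: a blocking path whose first participant $p_{j_1}$ is \emph{matched} and strictly prefers $x_1$ to $S(p_{j_1})$. Here rotating also empties the slot $S(p_{j_1})$, so the team-side comparison is no longer monotone and the direct argument breaks. To handle it I would trace back to the iteration $t^\ast$ at which $S(p_{j_1})$'s team — necessarily a least-preferred team of $E_{j_1}$, by the second bullet at iteration $t$ — entered $E_{j_1}$; at iteration $t^\ast-1$ the slot $x_1$ was already eligible for the then-unmatched $p_{j_1}$, so by the inductive hypothesis $x_1$ held a participant it weakly prefers to $p_{j_1}$, which by Lemma~\ref{lem:successively-better} persists to iteration $t$, forcing $x_1$ to be exactly indifferent between $p_{j_1}$ and $S(p_{j_1})=p_{j_2}$ and its value to stay frozen throughout the intervening iterations. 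Propagating this frozen-value and smaller-index information along the whole path $R$, and feeding it into the lexicographic optimality of the intermediate matchings and the $P\times L$ tie-break, one shows that the supposed blocking path would already have been ``cashed in'' at an earlier iteration, contradicting the inductive hypothesis. Pinning down this last argument cleanly is where the bulk of the work lies.
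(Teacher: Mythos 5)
There is a genuine gap: the two places where the paper's proof does essentially all of its work are exactly the two places you explicitly defer (``this tie-handling is the first delicate point'' and ``pinning down this last argument cleanly is where the bulk of the work lies''). For the second bullet, showing that a matched participant cannot sit at a non-bottom eligible team is not a local tie-breaking issue about one slot: the paper handles it (its Case~2) by forming the symmetric difference of $S$ with the earlier matching $S'$ from the iteration where that team entered the eligibility set, extracting the alternating path through $p^*$, locating the first participant $q$ on it that is unmatched in $S$ or strictly prefers $S'(q)$ to $S(q)$, proving four structural properties of the resulting subpath $W$, and converting its reverse into a blocking path for $S$ starting at $q$. Your sketch (``a lower-indexed participant has a stronger claim to such a tied slot'') does not engage with the fact that the obstruction can be an arbitrarily long alternating path, not a single swap. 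For the hard case of the first bullet ($p_{j_1}$ matched and strictly preferring $x_1$), the paper again builds $W=R[p_{j_1},x^*]+Q[x^*,q]$ from the symmetric difference, must separately rule out the possibilities that $Q$ ends at a slot or is a cycle (the cycle case is precisely where the augmented $P\times L$ tie-breaking is invoked), and must splice $\overline{Q}[q,\hat{x}]$ with $R[\hat{x},x_h]$ while excising a possible cyclic overlap to obtain a new blocking path $\tilde{R}$. None of this is in your proposal.

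There is also a structural problem with your induction scheme. The paper does not induct on iterations; it takes a minimal counterexample over \emph{participant indices}, and the two claims are proved simultaneously for that minimal index: establishing property~(3) of $W$ uses claim~(b) for all smaller-index participants at the \emph{same} iteration, and both hard cases conclude by exhibiting a blocking path at the same iteration starting at a smaller-index participant $q$, contradicting minimality. Your plan proves the second bullet at iteration $t$ before the first bullet at iteration $t$ and only assumes the first bullet at earlier iterations, so the contradiction target that the paper actually reaches (``a blocking path for $S^{(t)}$ starting at a lower-indexed participant'') is not available to you. The easy subcase you do handle --- rotating along $R$ when $p_{j_1}$ is unmatched and case-splitting on $x_h$ --- matches the paper and is correct, but by itself it covers only a small part of the proposition.
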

\begin{proof}
The statement of the proposition is equivalent to the statement that the following two claims hold for every participant $p_j$ with $j\in[m]$.  
\begin{enumerate}
\item[(a)] For each iteration of the while-loop in \Cref{line:augment-eligibility}, the computed matching $S$ admits no blocking path  within $\{E_{j'}\}_{j'\in [m]}$ that starts at $p_j$.
\item[(b)] For each iteration of the while-loop in \Cref{line:augment-eligibility}, if $p_j$ is matched in the computed matching $S$, then she is matched to a slot corresponding to a least-preferred team in $E_j$.
\end{enumerate}
Suppose for contradiction that there exist participants violating (a) or (b). Among such participants, let $p^*$ be the one with the smallest index. 
We consider two cases depending on whether $p^*$ violates (a) or (b) (if $p^*$ violates both, we choose one case arbitrarily).

\medskip
\noindent\textbf{Case 1. } We first assume that $p^*$ violates (a).
Then, at some iteration, the computed matching~$S$ has a blocking path $R=(p_{j_1},x_1,p_{j_2},x_2, \dots,p_{j_h},x_h)$ within $\{E_j\}_{j\in[m]}$ where $p_{j_1}=p^*$. 
By the definition of a blocking path, $p_{j_1}$ is either unmatched in $S$ or strictly prefers $x_1$ to $S(p_{j_1})$.

First, assume that $p_{j_1}$ is unmatched in $S$ (see \Cref{subfig:Runmatch}).  
Consider modifying $S$ by flipping along $R$ (i.e., adding $\{(p_{j_\ell}, x_\ell)\}_{\ell\in\{1,\dots,h\}}$ to $S$ and removing $\{(p_{j_\ell}, x_{\ell-1})\}_{\ell\in\{2,\dots,h\}}$ from $S$) and, in addition, removing the pair in $S$ covering $x_h$ if such a pair exists. 
The resultant matching $\tilde{S}$ consists of edges eligible for $\{E_{j}\}_{j\in [m]}$. 
Furthermore, compared to $S$, in this matching $\tilde{S}$, the first participant $p_{j_1}=p^*$ is better off (turns from unmatched to matched), all intermediate parties on~$R$ are assigned equally good partners, and either (i) the last slot $x_h$ is strictly better off or (ii) $x_h$ is assigned an equally good participant and the participant $S(x_h)$, whose index is larger than that of $p_{j_1}=p^*$, is worse off (turns from matched to unmatched). 
In both cases, the matching $\tilde{S}$ is better than $S$ with respect to the criteria used in the algorithm, which contradicts the fact that $S$ is chosen.

Therefore, we assume that $p_{j_1}=p^*$ is matched in $S$ and strictly prefers $x_1$ to $S(p_{j_1})$ (see \Cref{subfig:Rmatch}). 
This means that, in some iteration earlier than the computation of $S$, the participant~$p_{j_1}$ was left unmatched while $(p_{j_1}, x_1)$ was eligible at that time. 
Let $S'$ be the matching computed in that iteration. 
Among the slots on the path $R= (p_{j_1}, x_1, p_{j_2}, x_2, \dots, p_{j_h}, x_h)$, let $x^*$ be the first one for which $S'(x^*) \neq S(x^*)$ (where we regard $S'(x^*) = S(x^*)$ if and only if $x^*$ is matched to the same participant in $S'$ and $S$ or is unmatched in both). 
We claim that such a slot must exist; suppose for contradiction that it does not (see \Cref{fig:S'=S}).
Then, $S'(x_\ell) = S(x_\ell)$ for every slot $x_\ell$ on $R$, which means that $R$ forms a blocking path for $S'$ with $p_{j_1}$ being unmatched in $S'$. 
It follows that we can modify $S'$ in the same manner as $S$ was modified to $\tilde{S}$ above, resulting in a matching that is better than $S'$ and consists only of pairs that were eligible when $S'$ was computed. 
This contradicts the fact that $S'$ was chosen in that iteration.
Hence, the required $x^*$ must exist. 
Let $R[p_{j_1}, x^*]$ be the subpath of $R$ from $p_{j_1}$ to $x^*$. 
Observe that $x^*$ is matched in $S'$; otherwise, we could improve $S'$ by flipping along $R[p_{j_1}, x^*]$.
Moreover, $x^*$ is also matched in $S$ by Lemma~\ref{lem:successively-better}.

Consider an undirected bipartite graph $G$ whose vertex set is $P\cup L$ and edge set is $(S\setminus S')\cup (S'\setminus S)$ (where the edges are regarded as undirected edges). Take a connected component in $G$ containing~$x^*$. 
Since $S'(x^*)\neq S(x^*)$ and $x^*$ is matched in $S'$, the component is a path or a cycle, in which $x^*$ is covered by an $S'$-edge (see \Cref{fig:noblocking-alt}). 
Let $Q$ be a maximal path/cycle in the component starting at $x^*$ with the edge $(x^*,S'(x^*))$. 
Let $q$ be the participant on $Q$ closest to $x^*$ subject to the condition that $q$ is either unmatched in $S$ or satisfies $S'(q)\succ_q S(q)$ (see \Cref{subfig:noblocking-path}; if $Q$ is a cycle, we view it as the directed cycle when determining the closest $q$).
If there is no such participant on $Q$, let $q$ be the last participant on $Q$ (in case $Q$ is a cycle, this means that $q=S(x^*)$), 
which implies that $Q$ ends at a slot (\Cref{subfig:noblocking-end}) or $Q$ is a cycle (\Cref{subfig:noblocking-cycle}) in this case.
Let $W=R[p_{j_1}, x^*]+Q[x^*,q]$. 
We show the following properties of $W$.

\begin{figure}[htbp]
\input{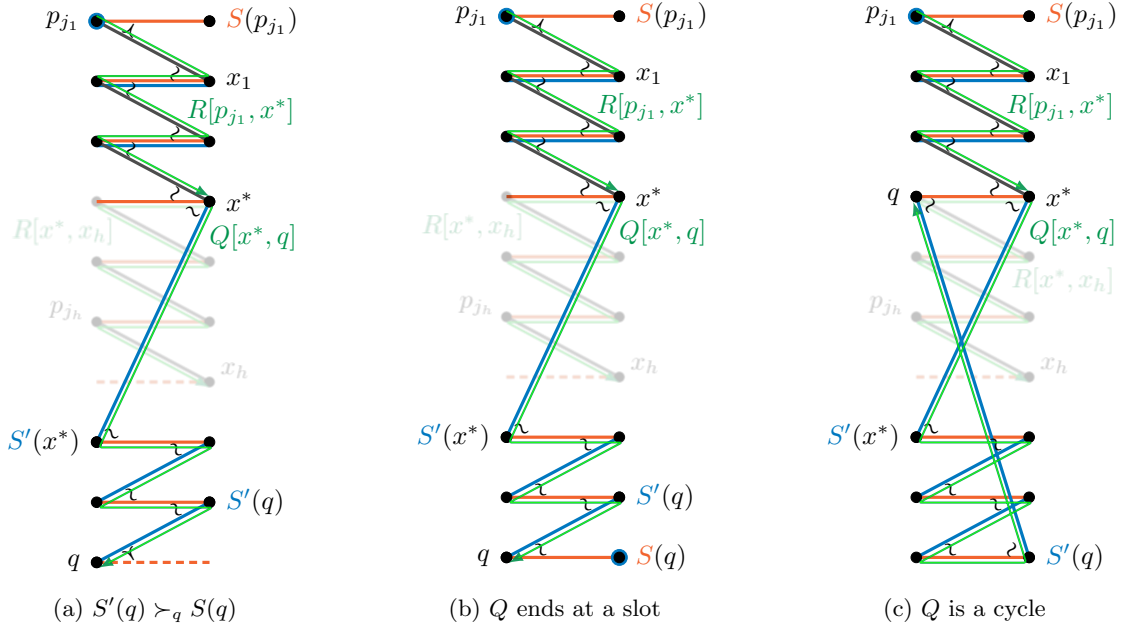}
\caption{Path $W=R[p_{j_1},x^*]+Q[x^*,q]$. Edges in $S$ and $S'$ are depicted in red and blue, respectively. Note that $Q[x^*,q]$ may intersect with $R[x^*,x_h]$ (which is faded in these figures) at points other than $x^*$.}\label{fig:noblocking-alt}
\end{figure}

\begin{enumerate}[(1)]
\item $W$ is a path that starts at a participant unmatched in $S'$ and uses pairs not in $S'$ and those in~$S'$ alternately.
\item Every participant on $W$ other than $p_{j_1}=p^*$ has an index smaller than $j_1$. In particular, $q$ does.
\item Every intermediate party on $W$ prefers its preceding and succeeding parties equally.
In addition, $x^*$ prefers the preceding participant, $S'(x^*)$, and $S(x^*)$ all equally.
\item The last participant $q$ is either unmatched in $S$ or strictly prefers $S'(q)$ to $S(q)$.
\end{enumerate}
We show these properties in turn.

\smallskip
\noindent\textit{Property (1). }
By the definition of $S'$, the participant $p_{j_1}=p^*$ is unmatched in $S'$. By the definition of $x^*$, all the intermediate parties on $R[p_{j_1}, x^*]$ are covered by edges in $S'\cap S$. Since $Q[x^*,q]$ is a path that starts with an $S'$-edge and uses $S'$-edges and $S$-edges alternately, all parties in $Q[x^*,q]$ are covered by edges in $S'\setminus S$. Thus, all the parties on $W$ are distinct, and $W$ satisfies property~(1).

\smallskip
\noindent\textit{Properties (2)--(3). }
As for the intermediate parties on $R[p_{j_1},x^*]$, property (3) immediately follows from the definition of a blocking path. In addition, for any participant $p\neq p_{j_1}$ on $R[p_{j_1}, x^*]$, property (2) holds; otherwise, we could improve $S'$ by flipping along $R[p_{j_1}, p]$, which would make $p_{j_1}$ matched and $p$ unmatched.

We next show property (3) for the slot $x^*$. 
Since $x^*$ is on a blocking path $R$, it weakly prefers the participant preceding it in $W$, say $\hat{p}$, to the participant $S(x^*)$ or is unmatched in $S$. Furthermore, as we have shown, $x^*$ is matched in $S'$. 
As $S'$ is computed earlier than $S$, Lemma~\ref{lem:successively-better} implies that $x^*$ is matched in $S$ and $S(x^*)\succsim_{x^*}S'(x^*)$. 
Thus, we have $\hat{p}\succsim_{x^*}S(x^*)\succsim_{x^*}S'(x^*)$. 
At the same time, we must have $S'(x^*) \succsim_{x^*} \hat{p}$; otherwise, we could improve $S'$ by flipping along $R[p_{j_1}, x^*]$ and removing $(S'(x^*), x^*)$, which would make $x^*$ strictly better off. 
It follows that $\hat{p}\sim_{x^*}S(x^*)\sim_{x^*}S'(x^*)$. Note also that $S'(x^*)$ is the party succeeding $x^*$ in $W$. 
Hence, property (3) holds for $x^*$. 

For the remaining parties, i.e., those on $Q[x^*,q]$ except $x^*$, we show properties (2) and (3) by induction.
Take a participant $p$ in $Q[x^*,q]$ and suppose that property (3) holds for all parties in $W$ preceding $p$. 
The slot preceding $p$ in $W$ is $S'(p)$ and the one succeeding $p$ is $S(p)$ if it exists. 
The participant $p$ must have an index smaller than $j_1$; otherwise, we could improve $S'$ by flipping along $W[p_{j_1},p]=R[p_{j_1}, x^*]+Q[x^*,p]$. 
Since $p$ has an index smaller than $j_1$, by the choice of $p^*=p_{j_1}$, we have that $p$ satisfies condition (b), i.e., $p$ is matched to a slot of a least-preferred eligible team in~$S$. 
Since the eligibility set is monotone increasing and $S'$ is computed earlier than $S$, this implies that $S'(p)\succsim_p S(p)$ whenever $p$ is matched in $S$. 
If $p\neq q$, then by the choice of $q$, the participant $p$ is matched in $S$ and $S'(p)\not\succ_p S(p)$, and hence we have $S'(p)\sim_p S(p)$. 
Thus, properties (2) and~(3) hold for $p$. 
In addition, when $p\neq q$, the slot $y\coloneqq S(p)$, which succeeds $p$ in $W$, is matched in $S'$ and satisfies $S'(y)\succsim_y S(y)=p$; otherwise, we could improve $S'$ by flipping along $W[p_{j_1}, S'(y)]$, which would make $y$ strictly better off without making other slots strictly worse off. 
At the same time, Lemma~\ref{lem:successively-better} implies that $S(y)\succsim_y S'(y)$. 
Thus, we have  $S'(y)\sim_{y}S(y)$, i.e., property (3) holds for the slot succeeding $p$ in $W$.

\smallskip
\noindent\textit{Property (4). }
Finally, we show that $q$ is either unmatched in $S$ or satisfies $S'(q)\succ_{q} S(q)$.
Suppose for contradiction that this does not hold. 
This means that $q$ is matched in $S$ and $S(q)\succsim_q S'(q)$.
By the argument in the previous paragraph, we also have $S'(q)\succsim_q S(q)$, so it must be that $S'(q)\sim_q S(q)$.
Moreover, by the definition of $q$, either $Q$ ends at the slot $S(q)$ that is unmatched in $S'$ or $Q$ is a cycle with $S(q)=x^*$. 

In the former case (\Cref{subfig:noblocking-end}), $W+(q,S(q))$ is an $S'$-alternating path starting and ending at parties unmatched in $S'$. Hence, we can improve $S'$ by flipping along  $W+(q,S(q))$, a contradiction.
In the latter case (\Cref{subfig:noblocking-cycle}), $Q$ is a cycle using edges in $S'$ and $S$ alternately and, by property (3), every party on $Q$  prefers her partners in $S$ and $S'$ equally. 
Then, all edges on $Q$ are eligible both when $S$ is computed and when $S'$ is computed. 
This contradicts the fact that matchings $S$ and $S'$ are chosen in the algorithm using the augmented tie-breaking.
\smallskip

\begin{figure}[t]
\begin{minipage}{.33\textwidth}
\centering
\begin{tikzpicture}[xscale=1.5,yscale=.8,very thick,font=\small]
\draw[opacity=0] (-.5,0) grid[step=0.5] (1.5,7);
\foreach \y [count=\i] in {9, 8, ..., 0}
{
    \coordinate(p\i) at (0,\y);
    \coordinate(s\i) at (1,\y);
}
\foreach \x / \y in {p4/s4, p5/s5, p6/s6, p8/s8, p9/s9}
{
    \draw[myred] (\x) -- (\y);
}
\draw[myred,densely dashed] (p7) -- (s7);
\draw[myred,densely dashed] (p10) -- (s10);
\foreach \x / \y in {s4/p8, s8/p9, s9/p10}
{
    \draw[myblue] (\x) -- (\y);
}

\foreach \x / \y in {p4/s5, p5/s6, p6/s7}
{
    \draw[mygray] (\x) -- (\y);
}

\foreach \a / \l in {p4/\sim, p5/\sim, p6/\sim}
{
    \node[font=\tiny,rotate=-120] at ($(\a)+(.3,-.13)$) {$\boldsymbol{\l}$};
}
\foreach \a / \l in {s5/\sim, s6/\sim}
{
    \node[font=\tiny,rotate=60] at ($(\a)+(-.3,.13)$) {$\boldsymbol{\l}$};
}
\node[font=\tiny,rotate=-40] at ($(s4)+(-.15,-.18)$) {$\boldsymbol{\sim}$};
\node[font=\tiny,rotate=-60] at ($(s8)+(-.3,.-.13)$) {$\boldsymbol{\sim}$};
\node[font=\tiny,rotate=-60] at ($(s9)+(-.3,.-.13)$) {$\boldsymbol{\sim}$};
\node[font=\tiny,rotate=-40] at ($(p8)+(.13,.13)$) {$\boldsymbol{\sim}$};
\node[font=\tiny,rotate=-60] at ($(p9)+(.3,.13)$) {$\boldsymbol{\sim}$};
\node[font=\tiny,rotate=-60] at ($(p10)+(.3,.13)$) {$\boldsymbol{\succ}$};

\foreach \a in {p4, p5, p6, p8, p9, p10, s5, s6, s7, s8, s9}
{
    \node[dot] at (\a) {};
}
\node[label={[left]:$p_{j_h}$}] at (p6) {};
\node[dot,label=left:{$\textcolor{myblue}{S'}(\hat{x})$}] at (p8) {};
\node[dot,label=left:{$q$}] at (p10) {};
\node[dot,label=right:{$\hat{x}$}] at (s4) {};
\node[label={[right]:$x_h$}] at (s7) {};
\node[dot,label=right:{$\textcolor{myblue}{S'}(q)$}] at (s9) {};

\begin{scope}[transform canvas={yshift=2pt},transparency group,opacity=.9]
    \draw[path] (s4) -- (p4) -- (s5) -- (p5) -- (s6) -- (p6) -- (s7);
    \draw[overpath] (s4) -- (p4) -- (s5) -- (p5) -- (s6) -- (p6) -- (s7);
\end{scope}
\begin{scope}[transform canvas={yshift=-1.5pt,xshift=1pt},transparency group,opacity=.9]
    \draw[path] (p10) -- (s9) -- (p9) -- (s8) -- (p8) -- (s4);
    \draw[overpath] (p10) -- (s9) -- (p9) -- (s8) -- (p8) -- (s4);
\end{scope}
\node[mygreen,font=\small,left] at (0.3,6.5) {$R[\hat{x},x_h]$};
\node[mygreen,font=\small,right] at (0.95,5.5) {$\overline{Q}[q,\hat{x}]$};
\end{tikzpicture}
\subcaption{$\hat{x}\ne x_h$ (possibly $\hat{x}=x^*)$}\label{subfig:pathWnormal}
\end{minipage}%
\begin{minipage}{.33\textwidth}
\centering
\begin{tikzpicture}[xscale=1.5,yscale=.8,very thick,font=\small]
\draw[opacity=.0] (-.5,0) grid[step=0.5] (1.5,7);
\foreach \y [count=\i] in {9, 8, ..., 0}
{
    \coordinate(p\i) at (0,\y);
    \coordinate(s\i) at (1,\y);
}
\foreach \x / \y in {p7/s7, p8/s8, p9/s9}
{
    \draw[myred] (\x) -- (\y);
}
\draw[myred,densely dashed] (p10) -- (s10);
\foreach \x / \y in {s7/p8, s8/p9, s9/p10}
{
    \draw[myblue] (\x) -- (\y);
}
\draw[mygray] (p6) -- (s7);

\node[font=\tiny,rotate=60] at ($(s7)+(-.3,.13)$) {$\boldsymbol{\sim}$};
\node[font=\tiny,rotate=-40] at ($(s7)+(-.3,-.13)$) {$\boldsymbol{\sim}$};
\node[font=\tiny,rotate=-60] at ($(s8)+(-.3,-.13)$) {$\boldsymbol{\sim}$};
\node[font=\tiny,rotate=-60] at ($(s9)+(-.3,-.13)$) {$\boldsymbol{\sim}$};
\node[font=\tiny,rotate=-60] at ($(p8)+(.3,.13)$) {$\boldsymbol{\sim}$};
\node[font=\tiny,rotate=-60] at ($(p9)+(.3,.13)$) {$\boldsymbol{\sim}$};
\node[font=\tiny,rotate=-60] at ($(p10)+(.3,.13)$) {$\boldsymbol{\succ}$};

\foreach \a in {p6, p7, p8, p9, p10, s8, s9}
{
    \node[dot] at (\a) {};
}
\node[label={[left]:$p_{j_h}$}] at (p6) {};
\node[dot,label=left:{$\textcolor{myred}{S}(x_h)$}] at (p7) {};
\node[dot,label=left:{$\textcolor{myblue}{S'}(\hat{x})$}] at (p8) {};
\node[dot,label=left:{$q$}] at (p10) {};
\node[dot,label=right:{$x_h=x^*$}] at (s7) {};
\node[dot,label=right:{$\textcolor{myblue}{S'}(q)$}] at (s9) {};

\begin{scope}[transform canvas={yshift=-2pt,xshift=1pt},transparency group,opacity=.9]
    \draw[path] (p10) -- (s9) -- (p9) -- (s8) -- (p8) -- (s7);
    \draw[overpath] (p10) -- (s9) -- (p9) -- (s8) -- (p8) -- (s7);
\end{scope}
\node[mygreen,font=\small,right] at (0.9,2.4) {$\overline{Q}[q,\hat{x}]$};
\end{tikzpicture}
\subcaption{$x^*=\hat{x}=x_h$}\label{subfig:pathWshort}
\end{minipage}
\begin{minipage}{.33\textwidth}
\centering
\begin{tikzpicture}[xscale=1.5,yscale=.8,very thick,font=\small]
\draw[opacity=.0] (-.5,0) grid[step=0.5] (1.5,7);
\foreach \y [count=\i] in {9, 8, ..., 0}
{
    \coordinate(p\i) at (0,\y);
    \coordinate(s\i) at (1,\y);
}
\coordinate(p4) at (0,6.5);
\coordinate(s4) at (1,6.5);

\foreach \x / \y in {p4/s4, p5/s5, p6/s6, p7/s7, p8/s8, p9/s9}
{
    \draw[myred] (\x) -- (\y);
}
\draw[myred,densely dashed] (p10) -- (s10);
\foreach \x / \y in {p7/s4, s7/p8, s8/p9, s9/p10}
{
    \draw[myblue] (\x) -- (\y);
}
\draw[myblue] ($(p4)+(0,-.02)$) -- ($(s5)+(0,-.02)$);
\draw[mygray] (p5) -- (s6);
\draw[mygray] (p6) -- (s7);

\node[font=\tiny,rotate=-60] at ($(s4)+(-.13,-.13)$) {$\boldsymbol{\sim}$};
\node[font=\tiny,rotate=60] at ($(s5)+(-.2,.13)$) {$\boldsymbol{\sim}$};
\node[font=\tiny,rotate=40] at ($(s6)+(-.3,.13)$) {$\boldsymbol{\sim}$};
\node[font=\tiny,rotate=60] at ($(s7)+(-.3,.13)$) {$\boldsymbol{\precsim}$};
\node[font=\tiny,rotate=-40] at ($(s7)+(-.3,-.13)$) {$\boldsymbol{\sim}$};
\node[font=\tiny,rotate=-60] at ($(s8)+(-.3,-.13)$) {$\boldsymbol{\sim}$};
\node[font=\tiny,rotate=-60] at ($(s9)+(-.3,-.13)$) {$\boldsymbol{\sim}$};

\node[font=\tiny,rotate=60] at ($(p4)+(.22,-.15)$) {$\boldsymbol{\sim}$};
\node[font=\tiny,rotate=60] at ($(p5)+(.22,-.13)$) {$\boldsymbol{\sim}$};
\node[font=\tiny,rotate=60] at ($(p6)+(.22,-.13)$) {$\boldsymbol{\sim}$};
\node[font=\tiny,rotate=-50] at ($(p7)+(.13,.13)$) {$\boldsymbol{\sim}$};
\node[font=\tiny,rotate=-60] at ($(p8)+(.3,.13)$) {$\boldsymbol{\sim}$};
\node[font=\tiny,rotate=-60] at ($(p9)+(.3,.13)$) {$\boldsymbol{\sim}$};
\node[font=\tiny,rotate=-60] at ($(p10)+(.3,.13)$) {$\boldsymbol{\succ}$};

\foreach \a in {p4, p5, p6, p7, p8, p9, p10, s4, s5, s6, s7, s8, s9}
{
    \node[dot] at (\a) {};
}
\node[dot,label=right:{$\tilde{x}$}] at (s5) {};

\node[label={[left]:$p_{j_h}$}] at (p6) {};
\node[dot,label=left:{$\textcolor{myred}{S}(x_h)$}] at (p7) {};
\node[dot,label=left:{$\textcolor{myblue}{S'}(\hat{x})$}] at (p8) {};
\node[dot,label=left:{$q$}] at (p10) {};
\node[dot,label=right:{$x_h=\hat{x}$}] at (s7) {};
\node[dot,label=right:{$\textcolor{myblue}{S'}(q)$}] at (s9) {};

\begin{scope}[transform canvas={yshift=-2pt,xshift=1pt},transparency group,opacity=.9]
    \draw[path,shorten <= 3pt] (s7) --(p7) -- (s4) -- ($(p4)+(.1,0)$) -- ($(s5)+(0,.15)$);
    \draw[overpath,shorten <= 3pt] (s7) --(p7) -- (s4) -- ($(p4)+(.1,0)$) -- ($(s5)+(0,.15)$);
\end{scope}
\node[mygreen,font=\small,right] at (0.9,5.8) {$\overline{Q}[\hat{x},\tilde{x}]$};
\begin{scope}[transform canvas={yshift=-1.5pt,xshift=0pt},transparency group,opacity=.9]
    \draw[path] (s5) -- (p5) -- (s6) -- ($(p6)+(.15,0)$) -- ($(s7)+(0,.15)$);
    \draw[overpath] (s5) -- (p5) -- (s6) -- ($(p6)+(.15,0)$) -- ($(s7)+(0,.15)$);
\end{scope}
\node[mygreen,font=\small,right] at (0.9,3.4) {$R[\tilde{x},\hat{x}]$};
\begin{scope}[transform canvas={yshift=-2pt,xshift=1pt},transparency group,opacity=.9]
    \draw[path,shorten >=3pt] (p10) -- (s9) -- (p9) -- (s8) -- (p8) -- (s7);
    \draw[overpath,shorten >=7pt] (p10) -- (s9) -- (p9) -- (s8) -- (p8) -- (s7);
\end{scope}
\node[mygreen,font=\small,right] at (0.9,2.4) {$\overline{Q}[q,\hat{x}]$};
\end{tikzpicture}
\subcaption{$x^*\ne \hat{x}=x_h$ (possibly $\tilde{x}=x^*$)}\label{subfig:pathWcycle}
\end{minipage}  
\caption{Path $\tilde{R}=\overline{Q}[q,\hat{x}]+R[\hat{x},x_h]$}\label{fig:Case1final}
\end{figure}
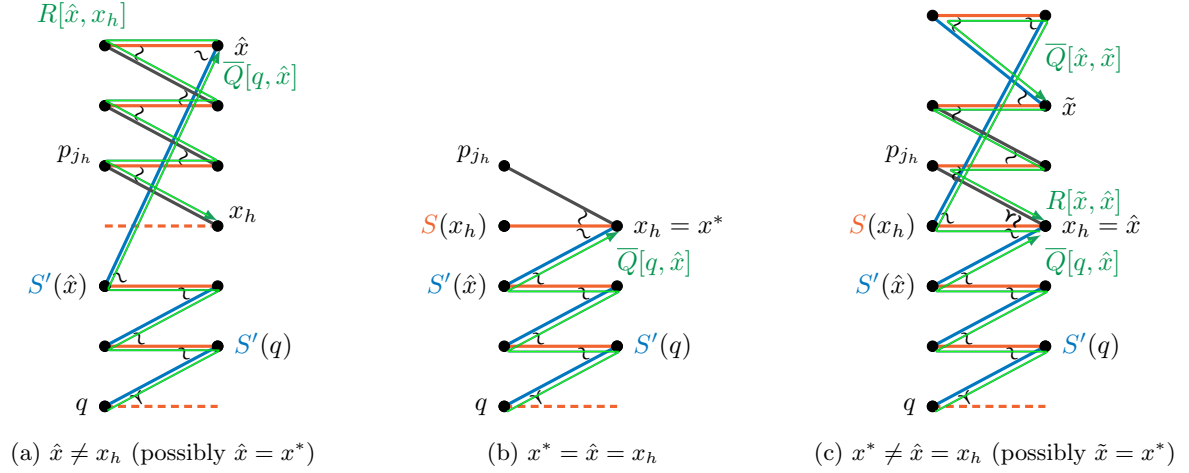

Thus, we have established properties (1)--(4) and have shown that, among the three figures in \Cref{fig:noblocking-alt}, only \Cref{subfig:noblocking-path} represents a possible situation.
Now, consider a sequence $\overline{Q}[q,x^*]+R[x^*, x_h]$, where $\overline{Q}[q,x^*]$ refers to the reverse subpath of $Q$ from $q$ to  $x^*$, 
and $R[x^*, x_h]$ is a subpath of $R$ from $x^*$ to $x_h$ (it is possible that $x^*=x_h$). 
Note that this sequence is not necessarily a path because $\overline{Q}[q,x^*]$ and $R[x^*, x_h]$ may share vertices other than $x^*$. Remove the maximal cyclic part if it exists and let $\tilde{R}$ be the resultant path.
More precisely, we let $\hat{x}$ be the first slot on the path $\overline{Q}[q,x^*]$ for which $\hat{x}\in R[x^*,x_h]$ (possibly $\hat{x}=x^*$) and then let $\tilde{R}=\overline{Q}[q,\hat{x}]+R[\hat{x},x_h]$ (see \Cref{fig:Case1final}).
We claim that $\tilde{R}$ is a blocking path for $S$ within the eligibility sets $\{E_j\}_{j\in [m]}$ at the time when $S$ is computed. 
\begin{itemize}
\item By its definition, $\tilde{R}$ uses edges not in $S$ and those in $S$ alternately. 
\item Since the first pair $(q, S'(q))$ of $\tilde{R}$ is eligible when $S'$ is computed,  it is also eligible when $S$ is computed. 
By property (4),  $q$ is either unmatched in $S$ or satisfies $S'(q)\succ_q S(q)$.
\item  As $R$ is a blocking path for $S$ and we have property (3), every intermediate party in $\tilde{R}$ prefers its preceding and succeeding parties equally. (One of them is the partner in $S$, and the other is equally good. This holds true even if the cyclic part is removed when defining $\tilde{R}$.)
\item As $R$ is a blocking path for $S$, either (i) $x_h$ is unmatched in $S$, (ii) $x_h$ is matched in $S$ and  $p_{j_h}\succ_{x_h}S(x_h)$, or (iii) $x_h$ is matched in $S$,  $p_{j_h}\sim_{x_h}S(x_h)$,  and $S(x_h)$ has an index larger than $j_1$. Note that, in $\tilde{R}$, the participant preceding $x_h$ is $p_{j_h}$ if $\hat{x}\neq x_h$ and $S'(x_h)$ if $\hat{x}=x_h$. 
\begin{itemize}
\item In the case where $\hat{x}\neq x_h$ (see \Cref{subfig:pathWnormal}), we can immediately see that $\tilde{R}$ is a blocking path if (i) or (ii) above holds. When (iii) holds for $R$, the index of $S(x_h)$ is larger than $j_1$. As property (2) implies that $q$ has an index smaller than $j_1$, we obtain that the index of $S(x_h)$ is larger than that of $q$. Thus, $\tilde{R}$ is a blocking path in this case. 

\item In the case where $x^*=\hat{x}=x_h$ (see \Cref{subfig:pathWshort}), the second claim in property (3) implies $p_{j_h}\sim_{x_h}S'(x_h)\sim_{x_h}S(x_h)$, where $S'(x_h)$ is the participant preceding $x_h$ in $\tilde{R}$. Then, (iii) above applies to $R$. By the same argument as above, property (2) implies that the index of $S(x_h)$ is larger than that of $q$. Hence, $\tilde{R}$ is a blocking path.

\item In the case where $x^*\neq\hat{x}=x_h$ (see \Cref{subfig:pathWcycle}), $\hat{x} = x_h$ belongs to $Q[x^*,q]$ and therefore to $W$.
Hence, property (3) implies $S(x_h)\sim_{x_h} S'(x_h)$, where $S'(x_h)$ is the participant preceding $x_h$ in $\tilde{R}$.
Let $\tilde{x}$ be the second slot on the path $\overline{Q}[q,x^*]$ for which $\tilde{x}\in R[x^*,x_h]$ (possibly $\tilde{x} = x^*$). Then, we cannot have $p_{j_h}\succ_{x_h}S(x_h)$; otherwise, we could improve~$S$ by flipping along the cycle $\overline{Q}[\hat{x},\tilde{x}]+R[\tilde{x},\hat{x}]$, a contradiction. Thus, we have $p_{j_h}\sim_{x_h}S(x_h)$. Then, again (iii) applies to $R$ and property (2) implies that the index of $S(x_h)$ is larger than $q$. This confirms that $\tilde{R}$ is a blocking path.
\end{itemize}
\end{itemize}
Therefore, $\tilde{R}$ is indeed a blocking path for $S$ starting at $q$. Since $q$ has an index smaller than $j_1$, this contradicts the minimality of the index of $p^*=p_{j_1}$.

\medskip
\noindent\textbf{Case 2. } We next assume that $p^*$ violates (b). 
Then, at some iteration, $p^*$ is matched in the computed matching $S$ and there is an eligible team strictly worse than the team corresponding to the slot $S(p^*)$. 
This means that, in some iteration earlier than the computation of $S$, the participant $p^*$ was left unmatched while $(p^*, S(p^*))$ was eligible. Let $S'$ be the matching computed in that iteration. 
Consider an undirected bipartite graph $G$ whose vertex set is $P\cup L$ and edge set is $(S\setminus S')\cup (S'\setminus S)$. 
Since $p^*$ is matched in $S$ but not matched in $S'$, the connected component containing $p^*$ is a path starting at $p^*$ with an $S$-edge. 
Denote this path by $R$.
Let $q$ be the participant on $R$ closest to $p^*$ subject to the condition that $q$ is either unmatched in $S$ or satisfies $S'(q)\succ_q S(q)$ (\Cref{subfig:case2normal}).
If there is no such participant on $R$ (which implies that $R$ ends at a slot), let $q$ be the last participant on $R$ (\Cref{subfig:case2end}).
Let $W$ be the subpath of $R$ from $p^*$ to $q$, and write $p_{j_1}\coloneqq p^*$ and $x^*\coloneqq S(p^*)$ for convenience.
Then, properties (1)--(4) used in the previous case hold for the current $W$.
Indeed, (1) is clear by the definition of $W$, while (2)--(4) can be shown similarly as before using the fact that $S'$ cannot be improved by flipping along any subpath of $R$ (see the appendix for a more detailed argument). In addition, now $W$ is an $S$-alternating path. 
Let $\tilde{R}$ be the subpath of the reverse of $W$ from $q$ to $x^*=S(p^*)$. Observe that the last slot $x^*$, which is an intermediate party in $W$, prefers $S'(x^*)$ and $S(x^*)$ equally and $S(x^*)=p^*$ has a larger index than $q$ by property (2). Then, by a similar argument as before, properties (1)--(4) imply that $\tilde{R}$ is a blocking path for $S$ starting at $q$.
Since $q$ has an index smaller than that of $p^*$, this contradicts the minimality of the index of $p^*$.
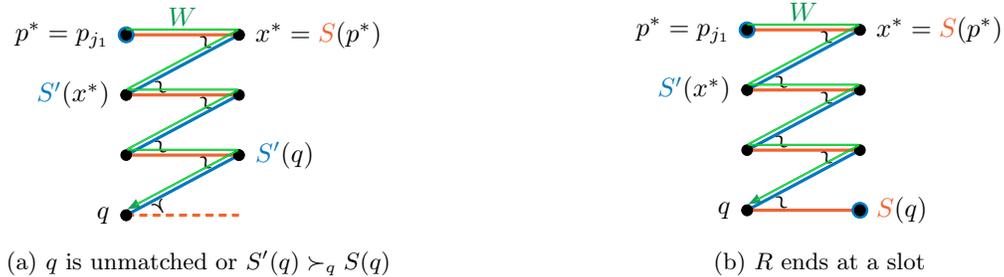
\begin{figure}[htbp]
\begin{minipage}[t]{.5\textwidth}
\centering
\begin{tikzpicture}[xscale=1.5,yscale=.8,very thick,font=\small]
\foreach \y [count=\i] in {4, 3, ..., 0}
{
    \coordinate(p\i) at (0,\y);
    \coordinate(s\i) at (1,\y);
}
\foreach \x / \y in {p1/s1, p2/s2, p3/s3}
{
    \draw[myred] (\x) -- (\y);
}
\draw[myred,densely dashed] (p4) -- (s4);
\foreach \x / \y in {s1/p2, s2/p3, s3/p4}
{
    \draw[myblue] (\x) -- (\y);
}
\foreach \a / \l in {p2/\sim, p3/\sim, p4/\succ}
{
    \node[font=\tiny,rotate=-60] at ($(\a)+(0.3,.13)$) {$\boldsymbol{\l}$};
}
\foreach \a / \l in {s1/\sim, s2/\sim, s3/\sim}
{
    \node[font=\tiny,rotate=-60] at ($(\a)+(-0.3,-.13)$) {$\boldsymbol{\l}$};
}
\foreach \a in {p1, p2, p3, p4, s1, s2, s3}
{
    \node[dot] at (\a) {};
}
\node[dot=6pt,fill=myblue,draw=myblue] at (p1) {};
\node[dot,label=left:{$p^*=p_{j_1}$}] at (p1) {};
\node[dot,label=left:{$\textcolor{myblue}{S'}(x^*)$}] at (p2) {};
\node[dot,label=left:{$q$}] at (p4) {};
\node[dot,label=right:{$x^*=\textcolor{myred}{S}(p^*)$}] at (s1) {};
\node[dot,label=right:{$\textcolor{myblue}{S'}(q)$}] at (s3) {};
\begin{scope}[transform canvas={yshift=2pt},transparency group,opacity=.9]
    \draw[path] (p1) -- (s1) -- (p2)  -- (s2) -- (p3) -- (s3) -- (p4);
    \draw[overpath] (p1) -- (s1) -- (p2)  -- (s2) -- (p3) -- (s3) -- (p4);
\end{scope}
\node[mygreen,font=\small] at (0.5,4.3) {$W$};
\end{tikzpicture}
\subcaption{$q$ is unmatched or $S'(q)\succ_q S(q)$}\label{subfig:case2normal}
\end{minipage}%
\begin{minipage}[t]{.5\textwidth}
\centering
\begin{tikzpicture}[xscale=1.5,yscale=.8,very thick,font=\small]
\foreach \y [count=\i] in {4, 3, ..., 0}
{
    \coordinate(p\i) at (0,\y);
    \coordinate(s\i) at (1,\y);
}
\foreach \x / \y in {p1/s1, p2/s2, p3/s3, p4/s4}
{
    \draw[myred] (\x) -- (\y);
}
\foreach \x / \y in {s1/p2, s2/p3, s3/p4}
{
    \draw[myblue] (\x) -- (\y);
}
\foreach \a / \l in {p2/\sim, p3/\sim, p4/\sim}
{
    \node[font=\tiny,rotate=-60] at ($(\a)+(0.3,.13)$) {$\boldsymbol{\l}$};
}
\foreach \a / \l in {s1/\sim, s2/\sim, s3/\sim}
{
    \node[font=\tiny,rotate=-60] at ($(\a)+(-0.3,-.13)$) {$\boldsymbol{\l}$};
}
\foreach \a in {p1, p2, p3, p4, s1, s2, s3, s4}
{
    \node[dot] at (\a) {};
}
\node[dot=6pt,fill=myblue,draw=myblue] at (p1) {};
\node[dot,label=left:{$p^*=p_{j_1}$}] at (p1) {};
\node[dot,label=left:{$\textcolor{myblue}{S'}(x^*)$}] at (p2) {};
\node[dot,label=left:{$q$}] at (p4) {};
\node[dot,label=right:{$x^*=\textcolor{myred}{S}(p^*)$}] at (s1) {};
\node[dot=6pt,fill=myblue,draw=myblue] at (s4) {};
\node[dot] at (s4) {};
\node[dot,label=right:{$\textcolor{myred}{S}(q)$}] at (s4) {};
\begin{scope}[transform canvas={yshift=2pt},transparency group,opacity=.9]
    \draw[path] (p1) -- (s1) -- (p2)  -- (s2) -- (p3) -- (s3) -- (p4);
    \draw[overpath] (p1) -- (s1) -- (p2)  -- (s2) -- (p3) -- (s3) -- (p4);
\end{scope}
\node[mygreen,font=\small] at (0.5,4.3) {$W$};
\end{tikzpicture}
\subcaption{$R$ ends at a slot}\label{subfig:case2end}
\end{minipage}%
\caption{Path $W$ in Case 2 of the proof of Proposition~\ref{prop:noblocking}}\label{fig:case2}
\end{figure}
\end{proof}

We now establish the group-strategyproofness of \Cref{alg:main} with augmented tie-breaking using Proposition~\ref{prop:noblocking}.
It is worth mentioning that even for strict preferences, existing proofs of strategyproofness rely either on the property that the outcome of the Gale--Shapley algorithm is independent of the order of proposals~\citep{DubinsFr81} or on the changes in the outcome when preferences are appropriately transformed~\citep{Roth82}. 
Our proof is more direct and does not rely on those properties, so we believe that it may be useful as an alternative proof for the case of strict preferences as well.

The following proposition is a rephrasing of \Cref{thm:SP}.

\begin{proposition}
\label{prop:SP-rephrase}
Let $I$ be any instance, $X$ be a non-empty subset of participants in $I$, and $I'$ be an instance obtained from $I$ by replacing the preference list $\succsim_p$ of each participant $p\in X$ with another (possibly identical) list $\succsim'_p$. 
Let $S$ and $S'$ be the matchings computed in the last iteration of the while-loop in \Cref{line:augment-eligibility} of \Cref{alg:main} with augmented tie-breaking applied to $I$ and $I'$, respectively. Then, there exists a participant $p\in X$ such that $p$ weakly prefers $S(p)$ to $S'(p)$ with respect to $\succsim_p$.
\end{proposition}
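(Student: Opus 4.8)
The plan is to argue by contradiction. Suppose that every participant $p\in X$ strictly prefers $S'(p)$ to $S(p)$ with respect to her \emph{true} preference $\succsim_p$; note that both $S$ and $S'$ are perfect matchings between the $m$ participants and the $m$ team-slots, since the while-loop of \Cref{alg:main} terminates only once everyone is matched. The goal is to exhibit a blocking path for $S$ within the final eligibility sets $\{E_j\}_{j\in[m]}$ produced by the run on $I$; by \Cref{prop:noblocking}, such a path cannot exist, which yields the desired contradiction.

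I would first record two observations used throughout. First, at every point in \Cref{alg:main} each eligibility set $E_j$ is \emph{upward closed} with respect to $\succsim_{p_j}$: every expansion adds the most preferred team(s) not yet present, so $E_j$ is always a union of the top indifference classes of $\succsim_{p_j}$. Since in the run on $I$ every participant reports truthfully (the manipulation takes place only in $I'$), \Cref{prop:noblocking} guarantees that each participant is matched by $S$ to a slot of a least-preferred team in her eligibility set; hence, for each improver $p$, the team of $S(p)$ lies in $E_p$, and as the team of $S'(p)$ is strictly preferred by $p$, the team of $S'(p)$ lies in $E_p$ as well. Thus the pair $(p,S'(p))$ is eligible, is not in $S$, and $p$ strictly prefers $S'(p)$ to $S(p)$, so it is a legitimate \emph{starting edge} of a blocking path. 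Second, since $S$ is computed in a later iteration than every earlier matching of the run on $I$, \Cref{lem:successively-better} lets us compare $S$ with those earlier matchings exactly as in the proof of \Cref{prop:noblocking}.

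Now let $p^*$ be the participant of smallest index among all $p$ with $S'(p)\succ_p S(p)$; by hypothesis $p^*$ exists. Write $t^*$ for the team of $S'(p^*)$. Because $S$ admits no blocking path within $\{E_j\}$, every length-one candidate $(p^*,x)$ with $x$ a slot of $t^*$ fails at its terminal slot, which means that in $S$ every slot of $t^*$ is occupied by a participant that $t^*$ weakly prefers to $p^*$ (with the appropriate index tie-break). In $S'$, however, $p^*$ occupies a slot of $t^*$, so at least one participant that held a slot of $t^*$ in $S$ is placed elsewhere in $S'$. Starting from the eligible ``wanting'' edge $(p^*,S'(p^*))$, I would thread an alternating path forward through $S$ and $S'$: from a slot pass to its $S$-occupant, and from that participant follow where $S'$ places it, continuing as long as the slot just reached is occupied in $S$ by a participant it \emph{strictly} prefers to the one we arrived from (so that slot cannot yet be the blocking path's terminus). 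Using \Cref{lem:successively-better}, upward closedness of eligibility, and the minimality of the index of $p^*$ (so that every participant encountered, having a smaller index, cannot itself be a strict improver and has its eligibility and $S$-assignment controlled relative to $S'$), together with the index tie-break conditions in the definition of a blocking path, I would show that this process must terminate at a slot that is unmatched in $S$, or at a slot whose $S$-partner it does not strictly prefer to the participant we arrived from and whose index exceeds that of the path's first participant. Pruning any cyclic part of the traced walk exactly as the path $\tilde R$ is extracted in the proof of \Cref{prop:noblocking}, one obtains a genuine blocking path for $S$ within $\{E_j\}$, contradicting \Cref{prop:noblocking}.

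The crux of the difficulty---and the reason the full argument is long---is the step where the slot just reached is occupied in $S$ by a participant it \emph{strictly} prefers to the one we arrived from. The walk cannot simply continue, and one must instead pass to the run on $I'$, where that same slot is matched to the (strictly worse) participant we arrived from. Then \Cref{prop:noblocking} applied to $S'$, combined with an analysis of the symmetric difference $S\,\triangle\,S'$---a union of alternating paths and cycles, on whose non-trivial components every coalition member lies---yields either an alternating segment that can be spliced into the blocking path under construction, or an alternating cycle all of whose parties are indifferent between their $S$- and $S'$-partners; the latter is impossible, because such a cycle would be simultaneously eligible in both runs and would contradict the augmented tie-breaking. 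This reuses, now across the two instances $I$ and $I'$ rather than across two iterations of one run, the delicate endpoint trichotomy of the proof of \Cref{prop:noblocking} (according to whether the pruned cycle's contact slot coincides with the last slot of the current path). It also requires tracking, for every participant outside $X$, that her eligibility sets and assignments in the two runs remain comparable, which is where the fact that only $X$'s reported preferences differ is used.
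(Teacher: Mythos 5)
Your high-level strategy---contradiction, blocking paths, \Cref{prop:noblocking} applied to both runs, the symmetric difference, and the augmented tie-breaking for indifferent cycles---matches the paper's. But you compare the \emph{final} matching $S$ of the truthful run with $S'$, whereas the paper compares $S'$ with $S^*$, the matching computed at the \emph{last iteration of the truthful run in which some manipulator is still unmatched}. This is not a cosmetic choice, and its absence creates two concrete problems that your sketch does not resolve. First, since $S$ and $S'$ are both perfect matchings, every nontrivial component of $S\,\triangle\,S'$ is an alternating \emph{cycle} (not ``paths and cycles''); in particular the component containing $p^*$ has no slot unmatched in $S$ to serve as the terminus of a blocking path. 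When your walk goes all the way around without getting stuck, it closes at the slot $S(p^*)$, whose $S$-partner is $p^*$ itself, so terminal condition (iii) fails (the index of $S(x_h)=p^*$ is not larger than $j_1$) and condition (ii) need not hold; your fallback---``an alternating cycle all of whose parties are indifferent \dots\ contradicts the augmented tie-breaking''---does not apply, because this cycle is \emph{not} all-indifferent ($p^*$ strictly prefers its $S'$-slot to its $S$-slot on it). A direct lexicographic-optimality argument happens to rescue that particular subcase, but it is not the argument you give. In the paper, by contrast, $p^*$ is unmatched in $S^*$, so the relevant component is a \emph{path} from $p^*$ to a slot unmatched in $S^*$, and both endpoints supply valid blocking-path start/terminus conditions for free, which is what lets the $(a,b)$ extraction and the four-case analysis close.

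Second, and more seriously, the ``pass to the run on $I'$'' step requires knowing that the parties of the would-be blocking path for $S'$ report the \emph{same} preferences in $I$ and $I'$, i.e., that no manipulator other than (possibly) the path's first participant lies on it. The paper obtains this from the defining property of $S^*$: any manipulator $p$ matched in $S^*$ satisfies $S'(p)\succ_p S(p)\sim_p S^*(p)$ (after the iteration producing $S^*$ no manipulator is unmatched again, so its eligibility set never expands further), hence manipulators cannot appear as intermediate parties of $\hat R$. With the final $S$ you have no such handle: when your walk gets stuck at a slot $x_\ell$ that strictly prefers $S(x_\ell)$ to $S'(x_\ell)$, the participant $S(x_\ell)$---the natural first participant of the reversed blocking path for $S'$---has uncontrolled index and may itself belong to $X$ with an arbitrary reported list $\succsim'$, in which case neither starting condition (i) nor (ii) of a blocking path for $S'$ can be verified. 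Your minimal-index choice of $p^*$ controls only the participants reached through the non-stuck steps. So the ``crux'' paragraph of your proposal, which is exactly where the difficulty lies, does not go through as stated; the rewind to $S^*$ (or an equivalent device) is the missing idea.
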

\begin{proof}
Suppose for contradiction that every manipulator $p\in X$ is better off, i.e.,  $S'(p)\succ_p S(p)$ for all $p\in X$.
Let us consider the \emph{last} iteration in the execution of the algorithm on the original instance~$I$ where a manipulator in $X$ was left unmatched; such an iteration exists since no participant is matched in the first iteration of the algorithm.
Let $S^*$ be the matching computed in that iteration. 

Consider an undirected bipartite graph $G$ whose vertex set is $P\cup L$ and edge set is $(S^*\setminus S')\cup (S'\setminus S^*)$ (where the edges are regarded as undirected edges). 
Note that all participants and slots are matched in $S'$ since it is the output matching for the instance $I'$.
Let $p^*\in X$ be a manipulator that is unmatched in $S^*$; such a manipulator exists by the definition of $S^*$. 
The connected component containing $p^*$ is a path starting at $p^*$ with an $S'$-edge and ending at some slot, say $x$, that is unmatched in $S^*$. 
Denote this path by $R$.  

Let $(a,b)$ be a pair of parties on $R$ satisfying the following conditions:
\begin{itemize}
\item $a$ is either the participant $p^*$ or an intermediate party that strictly prefers its succeeding party to its preceding party in the original preference $\succsim_a$.
\item $b$ is either the slot $x$ or an intermediate party that strictly prefers its preceding party to its succeeding party in the original preference $\succsim_b$.
\item $a$ and $b$ are distinct and appear on $R$ in this order, and every party $s$ between them prefers its preceding and succeeding parties equally in the original preference $\succsim_s$.
\end{itemize}
Such a pair $(a, b)$ must exist; it can be found by the following procedure. 
Initialize $(a,b)$ with $(p^*,x)$, and while there is a party between $a$ and $b$ that prefers its succeeding party to its preceding party (resp., its preceding party to its succeeding party), replace $a$ (resp., $b$) with that party.
Let $\hat{R}$ be the subpath of $R$ from $a$ to $b$.
We have four cases to consider: (a) $a$ is a participant and $b$ is a slot, (b) $a$ is a slot and $b$ is a participant, (c) $a$ and $b$ are both participants, and (d) $a$ and $b$ are both slots.

Note that for each manipulator $p\in X$, we have $S'(p)\succ_{p} S(p)$, which implies that the edge $(p, S'(p))$ is eligible when $S^*$ is computed.
This is because, in the iteration just before adding the team corresponding to $S(p)$ to $p$'s eligibility set in the execution of the algorithm on $I$, every team that $p$ strictly prefers to that team is eligible, and $p$ is unmatched. By definition, $S^*$ is computed in or after such an iteration, and hence $(p,S'(p))$ is eligible when $S^*$ is computed.
Additionally, if $p\in X$ is matched in $S^*$, then we have $S'(p)\succ_{p}S(p)\sim_{p}S^*(p)$; otherwise (i.e., if $S^*(p)\succ_p S(p)$), the participant $p$ must be left unmatched in an iteration later than the computation of $S^*$, which is a contradiction.
Since all participants on $\hat{R}$ except $a$ are matched in $S^*$ and weakly prefer their partners in $S^*$ to those in $S'$, we obtain that any participant on $\hat{R}$ except $a$ cannot belong to $X$.

\begin{figure}[t]
\begin{minipage}{.25\textwidth}
\centering
\begin{tikzpicture}[xscale=1.6,yscale=.8,very thick]
\draw[opacity=0] (-.5,0.5) grid[step=0.5] (2,5.5);
\node[dot,label=left:{$a$}] (p1) at (0,5) {};
\node[dot] (p2) at (0,4) {};
\node[dot] (p3) at (0,3) {};
\node[dot] (p4) at (0,2) {};
\node[dot=0] (p5) at (0,1) {};
\node[dot=0] (s1) at (1,5) {};
\node[dot] (s2) at (1,4) {};
\node[dot] (s3) at (1,3) {};
\node[dot] (s4) at (1,2) {};
\node[dot,label=right:{$b$}] (s5) at (1,1) {};
\draw[myred,densely dashed] (p1) -- (s1);
\draw[myred] (p2) -- (s2);
\draw[myred] (p3) -- (s3);
\draw[myred] (p4) -- (s4);
\draw[myred,densely dashed] (p5) -- (s5);
\draw[myblue] (p1) -- (s2);
\draw[myblue] (p2) -- (s3);
\draw[myblue] (p3) -- (s4);
\draw[myblue] (p4) -- (s5);
\node[font=\tiny,rotate=60] at (0.7,1.13) {$\boldsymbol{\prec}$};
\node[font=\tiny,rotate=60] at (0.7,2.13) {$\boldsymbol{\sim}$};
\node[font=\tiny,rotate=60] at (0.7,3.13) {$\boldsymbol{\sim}$};
\node[font=\tiny,rotate=60] at (0.7,4.13) {$\boldsymbol{\sim}$};
\node[font=\tiny,rotate=-120] at (0.3,1.87) {$\boldsymbol{\sim}$};
\node[font=\tiny,rotate=-120] at (0.3,2.87) {$\boldsymbol{\sim}$};
\node[font=\tiny,rotate=-120] at (0.3,3.87) {$\boldsymbol{\sim}$};
\node[font=\tiny,rotate=-120] at (0.3,4.87) {$\boldsymbol{\prec}$};
\begin{scope}[transform canvas={yshift=1.5pt},transparency group,opacity=.9]
\draw[path] (p1.center) -- (s2.center) -- (p2.center)  -- (s3.center) -- (p3.center) -- (s4.center) -- (p4.center) -- (s5.center);
\draw[overpath] (p1.center) -- (s2.center) -- (p2.center)  -- (s3.center) -- (p3.center) -- (s4.center) -- (p4.center) -- (s5.center);
\node[mygreen,font=\small] at (0.2,4.4) {$\hat{R}$};
\end{scope}
\end{tikzpicture}
\subcaption{$a\in P, b\in L$}\label{subfig:PL}
\end{minipage}%
\begin{minipage}{.25\textwidth}
\centering
\begin{tikzpicture}[xscale=1.6,yscale=.8,very thick]
\draw[opacity=0] (-.5,0.5) grid[step=0.5] (2,5.5);
\node[dot=0] (p1) at (0,5) {};
\node[dot] (p2) at (0,4) {};
\node[dot] (p3) at (0,3) {};
\node[dot,label=left:{$b$}] (p4) at (0,2) {};
\node[dot=0] (p5) at (0,1) {};
\node[dot=0] (s1) at (1,5) {};
\node[dot,label=right:{$a$}] (s2) at (1,4) {};
\node[dot] (s3) at (1,3) {};
\node[dot] (s4) at (1,2) {};
\node[dot=0] (s5) at (1,1) {};
\draw[myred] (p2) -- (s2);
\draw[myred] (p3) -- (s3);
\draw[myred] (p4) -- (s4);
\draw[myblue] (p1) -- (s2);
\draw[myblue] (p2) -- (s3);
\draw[myblue] (p3) -- (s4);
\draw[myblue] (p4) -- (s5);
\node[font=\tiny,rotate=60] at (0.7,2.13) {$\boldsymbol{\sim}$};
\node[font=\tiny,rotate=60] at (0.7,3.13) {$\boldsymbol{\sim}$};
\node[font=\tiny,rotate=60] at (0.7,4.13) {$\boldsymbol{\succ}$};
\node[font=\tiny,rotate=-120] at (0.3,1.87) {$\boldsymbol{\succ}$};
\node[font=\tiny,rotate=-120] at (0.3,2.87) {$\boldsymbol{\sim}$};
\node[font=\tiny,rotate=-120] at (0.3,3.87) {$\boldsymbol{\sim}$};
\begin{scope}[transform canvas={yshift=1.5pt},transparency group,opacity=.9]
\draw[path] (p4.center) -- (s4.center)  -- (p3.center) -- (s3.center) -- (p2.center) -- (s2.center);
\draw[overpath] (p4.center) -- (s4.center)  -- (p3.center) -- (s3.center) -- (p2.center) -- (s2.center);
\end{scope}
\node[mygreen,font=\small] at (-.1,4.4) {reverse of $\hat{R}$};
\end{tikzpicture}
\subcaption{$a\in L, b\in P$}\label{subfig:LP}
\end{minipage}%
\begin{minipage}{.25\textwidth}
\centering
\begin{tikzpicture}[xscale=1.6,yscale=.8,very thick]
\draw[opacity=0] (-.5,0.5) grid[step=0.5] (2,5.5);
\node[dot,label=left:{$a$}] (p1) at (0,5) {};
\node[dot] (p2) at (0,4) {};
\node[dot] (p3) at (0,3) {};
\node[dot,label=left:{$b$}] (p4) at (0,2) {};
\node[dot=0] (p5) at (0,1) {};
\node[dot=0] (s1) at (1,5) {};
\node[dot,label=right:{$\textcolor{myblue}{S'}(a)$}] (s2) at (1,4) {};
\node[dot] (s3) at (1,3) {};
\node[dot,label=right:{$\textcolor{myred}{S^*}(b)$}] (s4) at (1,2) {};
\node[dot=0] (s5) at (1,1) {};
\draw[myred,densely dashed] (p1) -- (s1);
\draw[myred] (p2) -- (s2);
\draw[myred] (p3) -- (s3);
\draw[myred] (p4) -- (s4);
\draw[myblue] (p1) -- (s2);
\draw[myblue] (p2) -- (s3);
\draw[myblue] (p3) -- (s4);
\draw[myblue] (p4) -- (s5);
\node[font=\tiny,rotate=60] at (0.7,2.13) {$\boldsymbol{\sim}$};
\node[font=\tiny,rotate=60] at (0.7,3.13) {$\boldsymbol{\sim}$};
\node[font=\tiny,rotate=60] at (0.7,4.13) {$\boldsymbol{\sim}$};
\node[font=\tiny,rotate=-120] at (0.3,1.87) {$\boldsymbol{\succ}$};
\node[font=\tiny,rotate=-120] at (0.3,2.87) {$\boldsymbol{\sim}$};
\node[font=\tiny,rotate=-120] at (0.3,3.87) {$\boldsymbol{\sim}$};
\node[font=\tiny,rotate=-120] at (0.3,4.87) {$\boldsymbol{\prec}$};
\begin{scope}[transform canvas={yshift=1.5pt},transparency group,opacity=.9]
\draw[path] (p1.center) -- (s2.center) -- (p2.center)  -- (s3.center) -- (p3.center) -- (s4.center) -- (p4.center);
\draw[overpath] (p1.center) -- (s2.center) -- (p2.center)  -- (s3.center) -- (p3.center) -- (s4.center) -- (p4.center);
\end{scope}
\node[mygreen,font=\small] at (0.2,4.4) {$\hat{R}$};
\end{tikzpicture}
\subcaption{$a,b\in P$}\label{subfig:PP}
\end{minipage}%
\begin{minipage}{.25\textwidth}
\centering
\begin{tikzpicture}[xscale=1.6,yscale=.8,very thick]
\draw[opacity=0] (-.5,0.5) grid[step=0.5] (2,5.5);
\node[dot=0] (p1) at (0,5) {};
\node[dot] (p2) at (0,4) {};
\node[dot] (p3) at (0,3) {};
\node[dot] (p4) at (0,2) {};
\node[dot=0] (p5) at (0,1) {};
\node[dot=0] (s1) at (1,5) {};
\node[dot,label=right:{$a$}] (s2) at (1,4) {};
\node[dot] (s3) at (1,3) {};
\node[dot] (s4) at (1,2) {};
\node[dot,label=right:{$b$}] (s5) at (1,1) {};
\draw[myred] (p2) -- (s2);
\draw[myred] (p3) -- (s3);
\draw[myred] (p4) -- (s4);
\draw[myred,densely dashed] (p5) -- (s5);
\draw[myblue] (p1) -- (s2);
\draw[myblue] (p2) -- (s3);
\draw[myblue] (p3) -- (s4);
\draw[myblue] (p4) -- (s5);
\node[font=\tiny,rotate=60] at (0.7,1.13) {$\boldsymbol{\prec}$};
\node[font=\tiny,rotate=60] at (0.7,2.13) {$\boldsymbol{\sim}$};
\node[font=\tiny,rotate=60] at (0.7,3.13) {$\boldsymbol{\sim}$};
\node[font=\tiny,rotate=60] at (0.7,4.13) {$\boldsymbol{\succ}$};
\node[font=\tiny,rotate=-120] at (0.3,1.87) {$\boldsymbol{\sim}$};
\node[font=\tiny,rotate=-120] at (0.3,2.87) {$\boldsymbol{\sim}$};
\node[font=\tiny,rotate=-120] at (0.3,3.87) {$\boldsymbol{\sim}$};
\begin{scope}[transform canvas={yshift=1.5pt},transparency group,opacity=.9]
\draw[path] (s2.center) -- (p2.center)  -- (s3.center) -- (p3.center) -- (s4.center) -- (p4.center) -- (s5.center);
\draw[overpath] (s2.center) -- (p2.center)  -- (s3.center) -- (p3.center) -- (s4.center) -- (p4.center) -- (s5.center);
\end{scope}
\node[mygreen,font=\small] at (0.2,4.4) {$\hat{R}$};
\end{tikzpicture}
\subcaption{$a,b\in L$}\label{subfig:LL}
\end{minipage}%
\caption{The four cases in the proof of Proposition~\ref{prop:SP-rephrase}.
The preferences shown in the figures represent the ones in the original instance $I$.}
\label{fig:path-ab}
\end{figure}
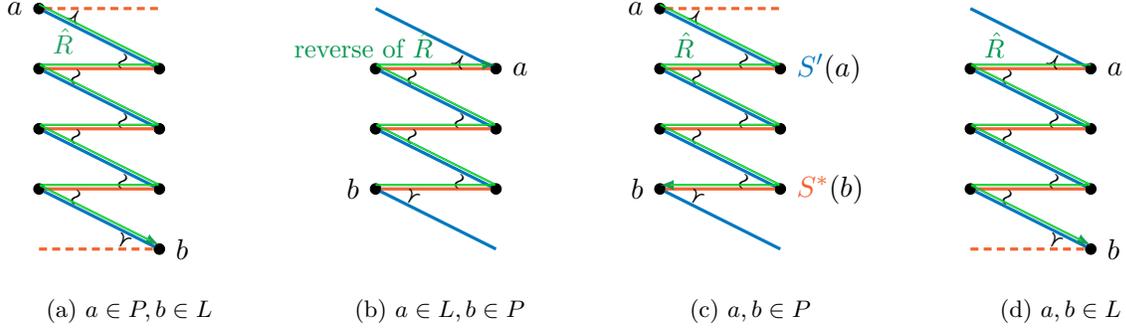

By Proposition~\ref{prop:noblocking}, we have that $S^*$ has no blocking path with respect to the eligibility sets when $S^*$ is computed.
Proposition~\ref{prop:noblocking} also implies that $S'$ has no blocking path in the instance $I'$ with respect to the eligibility sets when $S'$ is computed. 
We consider four cases depending on whether each of $a$ and $b$ is a participant or a slot.

\smallskip
\noindent\textit{Case (a). }
If $a$ is a participant and $b$ is a slot, then the path $\hat{R}$ forms a blocking path for $S^*$ within the eligibility sets when $S^*$ is computed, a contradiction (see \Cref{subfig:PL}).

\smallskip
\noindent\textit{Case (b). }
If $a$ is a slot and $b$ is a participant, consider the reverse of the path $\hat{R}$ (see \Cref{subfig:LP}). 
This is a path from the participant $b$ to the slot $a$ and does not contain manipulators in $X$, and hence the preference lists of all parties on this path are the same in $I$ and $I'$. Then, in the instance $I'$, this path forms a blocking path for $S'$ within the eligibility sets when $S'$ is computed, a contradiction.

\smallskip
\noindent\textit{Case (c). }
We next assume that $a$ and $b$ are both participants (see \Cref{subfig:PP}). In this case, if the index of $b$ is larger than that of $a$, then the subpath of $\hat{R}$ from $a$ to $S^*(b)$ (i.e., the slot preceding~$b$) forms a blocking path for $S^*$, a contradiction. Thus, $a$'s index is larger than $b$'s index.
Now, consider the reverse of the subpath of $\hat{R}$ from $S'(a)$ to $b$, denoted by $\tilde{R}$. 
Then, $\tilde{R}$ is a path from the participant $b$ to the slot $S'(a)$ and does not contain manipulators in $X$ since $a$ is not on the path. 
Since the preference lists of all parties on $\tilde{R}$ are the same in $I$ and $I'$, $b$ strictly prefers the succeeding slot on $\tilde{R}$ to $S'(b)$, and all other parties on $\tilde{R}$ prefer their partners in $S^*$ and $S'$ equally. 
As the index of $S'(S'(a))=a$ is larger than that of $b$, it follows that $\tilde{R}$ forms a blocking path for~$S'$ in the instance $I'$, a contradiction.   

\smallskip
\noindent\textit{Case (d). }
Finally, assume that $a$ and $b$ are both slots (see \Cref{subfig:LL}). Then, $\hat{R}$ is an even-length path using edges in $S^*$ and $S'$ alternately and does not contain manipulators in $X$. 
Note that all participants $\hat{p}$ on $\hat{R}$ prefer $S'(\hat{p})$ and $S^*(\hat{p})$ equally, and hence all the pairs on $\hat{R}$ are eligible when $S'$ is computed in the execution of the algorithm on $I'$ and when $S^*$ is computed in the execution on~$I$. 
Consider modifying the matching $S^*$ by flipping along $\hat{R}$ and, in addition, removing the pair in $S^*$ that covers $b$ if such a pair exists. In the resultant matching, the slot $a$ is worse off while $b$ is better off compared to $S^*$. Since $S^*$ was chosen over this matching, the slot $a$ must have a higher priority than the slot $b$.
Next, consider modifying the matching $S'$ by flipping along $\hat{R}$ and removing the pair in $S'$ that covers $a$. In the resultant matching, the slot $b$ is worse off while $a$ is better off compared to $S'$. Since $S'$ was chosen over this matching, the slot $b$ must have a higher priority than the slot $a$, which contradicts the previous conclusion.
\end{proof}

We remark that \Cref{alg:main} is not (individually) strategyproof for teams.
Indeed, a classic result of \citet[Thm.~3]{Roth82} implies that even in one-to-one matching, no procedure can satisfy participant-justified EF along with strategyproofness for both sides.

\section{Discussion}
\label{sec:discussion}

In this paper, we have presented an algorithm for many-to-one matching (\Cref{alg:main}), which unifies both the round-robin algorithm from fair division and the Gale--Shapley algorithm from two-sided matching.
Not only does our algorithm provide fairness to both sides in the potential presence of indifferences, but it also satisfies several attractive properties including Pareto optimality, balancedness, and group-strategyproofness on the participant side.
Moreover, the algorithm can be implemented in polynomial time.
We believe that fairness among teams brings a new perspective that has been largely ignored in the extensive literature of two-sided matching thus far, and this perspective has a potential to be valuable in several applications.

It is worth noting that our results also provide new insights in the context of stable matching with indifferences~\citep{ErdilEr08,Kamiyama14,ErdilEr17,DomanicLaPl17,ErdilKu19,ErdilKiKu22,BandoImKa25}. 
In particular, consider the following extension of our setting.
\begin{enumerate}
\item Each team $i \in T$ has a quota $q_i$. Without loss of generality, we assume that $q_i\le m$.
\item Each participant $p \in P$ may prefer being unassigned to being assigned to certain teams, i.e., each participant has a weak transitive preference $\succsim_p$ over $T \cup \{\varnothing\}$.
\item Each team $i \in T$ may prefer not to be assigned certain participants, i.e., each team has a weak transitive preference $\succsim_i$ over $P \cup \{\varnothing\}$.
\end{enumerate}
In this extended setting, an allocation is allowed to be partial, i.e., it may leave some participants unassigned.
A (possibly partial) allocation $A$ is called \emph{(weakly) stable} if 
\begin{enumerate}[(i)]
\item it satisfies participant-justified EF;
\item $p\succsim_i\varnothing$ and $i\succsim_p\varnothing$ for every team $i$ and participant $p$ such that $p\in A_i$ (individually rational);
\item there do not exist $p\in A_i$ and $i'\in T$ such that $i'\succ_p i$ and $|A_{i'}|<q_{i'}$ (non-wasteful).
\end{enumerate}

Now, consider applying \Cref{alg:main} together with the additional tie-breaking rule described in \Cref{sec:SP}.
We construct an augmented instance as follows.
Let $z=\sum_{i\in T}q_i$ be the sum of quotas across all teams.
We create $z$ dummy participants representing ``unassigned'', each of whom values every team (including the dummy team that we will create) equally.
Similarly, we create a dummy team representing ``unassigned'' with quota $m$; this team values all participants (including the dummy participants) equally. For each team (including dummy), we create as many slots as its quota.
In the preference list of each real participant, $\varnothing$ is replaced by the dummy team; likewise, in the preference list of each real team, $\varnothing$ is replaced by dummy participants, all of whom are tied.

Observe that the augmented instance has $z+m$ team-slots and $z+m$ participants.
Moreover, there are sufficiently many dummy team-slots that every real participant can be assigned to a dummy team-slot rather than a real team-slot if she wishes; a similar statement holds for dummy participants and real teams.
To this instance, we apply \Cref{alg:main} with augmented tie-breaking, using the team-slots we create above instead of those from Lines 1--4. 
The number of slots may differ across teams in this instance, but priorities over slots (used in Line 12) are again defined in round-robin order, where teams with fewer slots are skipped in later rounds. 
In the output, the dummy team and participants are removed.
With these modifications, the algorithm simultaneously satisfies stability, PO, and group-strategyproofness for participants due to the corresponding properties of \Cref{alg:main}.
We remark that \citet{DomanicLaPl17} also established a mechanism with these properties by reducing the problem to the ``generalized assignment game''~\citep{DemangeGa85}. In contrast, our algorithm provides a more direct approach without such a reduction.

We conclude the paper by briefly discussing fairness among teams for the aforementioned extension.
With different quotas, the resulting allocation may no longer be balanced.
The definition of team-justified SD-EF1 needs to be adjusted to take into account the quotas as well as the preferences of being unassigned.

\begin{definition}
An allocation $A$ satisfies \emph{extended team-justified SD-EF1} if the following holds: For all distinct $i,j\in T$ where team~$i$ has quota~$z_i$, let $B_j$ denote the set of participants in $A_j$ who weakly prefer team~$i$ to team~$j$ and are strictly better than unassigned for team~$i$.
Then, for every subset $C_j\subseteq B_j$ of size at most $z_i$, it holds that $A_i\succsim_i^{\text{SD}} (C_j\setminus X)$ for some $X\subseteq C_j$ with $|X|\le 1$.
\end{definition}

Using an argument similar to that in the proof of \Cref{thm:main-properties}, one can show that the output of the augmented algorithm satisfies extended team-justified SD-EF1.
\begin{theorem}
There exists a polynomial-time algorithm for the extended setting that returns an allocation satisfying stability, extended team-justified SD-EF1, and PO, and is moreover group-strategyproof for participants.
\end{theorem}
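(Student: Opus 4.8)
The plan is to analyze the concrete algorithm outlined above. Build the augmented instance $\widehat I$: introduce $z=\sum_{i\in T}q_i$ dummy participants, each indifferent among all teams, and one dummy team with quota~$m$ that is indifferent among all participants; create $q_i$ slots for each real team~$i$ and $m$ slots for the dummy team; in every real party's list replace $\varnothing$ by the dummy team (resp.\ by the tied block of dummy participants). Fix the index convention that the dummy participants come after all real participants, and let the dummy team be the last team in the round-robin ordering of team-slots. Now run \Cref{alg:main} with augmented tie-breaking on $\widehat I$ (using these slots and this round-robin priority in place of Lines~1--4 and Line~12) and delete the dummy parties from the returned matching. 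Since $z\le nm$, the instance $\widehat I$ has size polynomial in the original input, so polynomial running time is immediate from the polynomial-time implementation of \Cref{alg:main}. It remains to verify stability, extended team-justified SD-EF1, PO, and group-strategyproofness for participants of the resulting (possibly partial) allocation; each of these is obtained by transferring a property of \Cref{alg:main} on $\widehat I$ down to the real parties.

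Three of the conclusions transfer almost mechanically. \emph{Participant-justified EF} of the output is just the restriction to real participants and real teams of participant-justified EF on $\widehat I$ (for two real participants, the comparison of a participant's two real teams and of a real team's two assigned real participants is unchanged by the presence of dummies). \emph{PO} follows from the argument establishing PO in \Cref{thm:main-properties}: by the Erdil--Ergin characterization (\Cref{lem:erdil-ergin}, which applies verbatim in the extended setting), a participant-justified EF allocation is PO iff it admits no Pareto improvement cycle; such a cycle for the restricted allocation involves only real parties, hence is also a Pareto improvement cycle for the matching on $\widehat I$, which is PO by the proof of \Cref{thm:main-properties} and therefore admits none. \emph{Group-strategyproofness for participants} follows from \Cref{prop:SP-rephrase} applied to $\widehat I$: a manipulating coalition of real participants in the original instance corresponds to a coalition in $\widehat I$ that deviates while the dummy participants keep their fixed (indifferent) lists, so some coalition member weakly prefers her truthful assignment; dummy participants are always indifferent and never need to be counted among the manipulators.

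The substantive part is the \emph{individual rationality} and \emph{non-wastefulness} clauses of stability, for which the black-box guarantees do not suffice and one must use the finer structure of \Cref{alg:main}: the absence of a blocking path in every iteration (\Cref{prop:noblocking}), the monotone improvement of each team-slot's value (\Cref{lem:successively-better}), and the fact that each participant's eligibility set grows from her most-preferred teams downward. For a real participant~$p$ assigned to a slot of a real team~$i$ with $\varnothing\succ_p i$: the final matching on $\widehat I$ is complete, and since $p$ occupies a real slot, fewer than $m$ real participants occupy dummy slots, so some dummy slot holds a dummy participant~$d$, whose index exceeds that of~$p$; then $p$ together with that dummy slot is a length-one blocking path via clause~(iii) of the last-slot condition (the dummy team being indifferent between $p$ and~$d$), contradicting \Cref{prop:noblocking}. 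For a real team~$i$ holding a real participant~$p$ with $\varnothing\succ_i p$, and likewise for non-wastefulness, one argues at the iteration in which~$i$ (resp.\ the team that~$p$ would prefer and that has a free slot) first entered~$p$'s eligibility set: at the start of that iteration~$p$ was unmatched while every team she ranks above~$i$ was already eligible, so blocking-path-freeness forces the slots of those teams to hold participants they value at least as highly as~$p$ (with the index tie-break); propagating this to the final matching by \Cref{lem:successively-better} shows that each such slot is ultimately valued at least as much as~$\varnothing$, which yields both clauses. Finally, \emph{extended team-justified SD-EF1} is established exactly as in the proof of \Cref{thm:main-properties}: no team-slot has justified envy toward a later team-slot in the round-robin order, and aggregating this over the slot positions of any two real teams~$i,j$ gives the required SD comparison, where one now keeps only the participants of~$A_j$ that weakly prefer~$i$ to~$j$ \emph{and} are worth strictly more than~$\varnothing$ to~$i$, and where the possibly unequal numbers of slots are handled as in the quota-adjusted round-robin.

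I expect the main obstacle to be the interplay between the individual-rationality and non-wastefulness conditions: a deficiency detected on the participant side or the team side cannot in general be repaired by a single swap without hurting the other side, so one cannot appeal to swap stability or to participant-justified EF as a black box, and the argument must instead trace precisely how and when the dummy parties are placed by the algorithm. The careful bookkeeping of which iteration adds which team to which eligibility set, combined with \Cref{lem:successively-better}, is where the real work lies; once that structural picture is in place, the remaining verifications are routine adaptations of the proofs already given for \Cref{alg:main}.
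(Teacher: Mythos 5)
Your proposal follows essentially the same route as the paper: build the augmented instance with $z$ dummy participants and a dummy team of quota $m$, run \Cref{alg:main} with augmented tie-breaking and quota-adjusted round-robin slot priorities, and transfer each property from the corresponding guarantee of the main algorithm (the paper itself only sketches this, deferring to the proofs of \Cref{thm:main-properties}, \Cref{prop:noblocking}, \Cref{lem:successively-better}, and \Cref{prop:SP-rephrase}). Your additional details---in particular deriving individual rationality and non-wastefulness from blocking-path-freeness applied to pairs involving dummy slots, with dummies indexed after real participants---are consistent with the paper's intended argument and fill in steps it leaves implicit.
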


\section*{Acknowledgments}

This work was partially supported by JSPS KAKENHI Grant Numbers JP17K12646, JP20K19739, JP21K17708, JP21H03397, and JP25K00137, by JST PRESTO Grant Numbers JPMJPR2122, JPMJPR20C1, and JPMJPR212B, by JST ERATO Grant Number JPMJER2301, by JST CRONOS Japan Grant Number JPMJCS24K2, by Value Exchange Engineering, a joint research project between Mercari, Inc.\ and the RIISE, by the Singapore Ministry of Education under grant number MOE-T2EP20221-0001, and by an NUS Start-up Grant.
We would like to thank the WINE 2025 reviewers for their valuable comments.

\bibliographystyle{plainnat}
\bibliography{main}

\appendix

\section{Omitted Details in the Proof of Proposition~\ref{prop:noblocking}}

In the analysis of Case 2 in the proof of Proposition~\ref{prop:noblocking}, we claimed that the walk $W$ satisfies properties (1)--(4), which are presented below again.
\begin{enumerate}[(1)]
\item $W$ is a path that starts at a participant unmatched in $S'$ and uses pairs not in $S'$ and those in~$S'$ alternately.
\item Every participant on $W$ other than $p_{j_1}=p^*$ has an index smaller than $j_1$. In particular, $q$ does.
\item Every intermediate party on $W$ prefers its preceding and succeeding parties equally.
In addition, $x^*$ prefers the preceding participant, $S'(x^*)$, and $S(x^*)$ all equally.
\item The last participant $q$ is either unmatched in $S$ or strictly prefers $S'(q)$ to $S(q)$.
\end{enumerate}
Here we provide a detailed proof of this claim.
First, recall that in Case 2, the walk $W$ is defined as a subpath $R[p^*, q]$ of path $R$, which starts at the participant $p_{j_1}=p^*$ that is unmatched in $S'$, and uses edges in $S\setminus S'$ and $S'\setminus S$ alternately. 
This implies property (1) immediately. 
We next show properties (2)--(4).
Recall that, by the choice of $q$, any participant $p$ on $W$ other than $q$ satisfies $S(p)\succsim_p S'(p)$. 
In addition, if $q$ is matched in $S$ and satisfies $S(q)\succsim_q S'(q)$, then $R$ ends at the slot succeeding $q$ that is unmatched in $S'$.
Recall also that we set $p_{j_1}\coloneqq p^*$ and $x^*\coloneqq S(p^*)$ for notational convenience, and that $(p^*, x^*)\in S$ is eligible when $S'$ is computed.

\smallskip
\noindent\textit{Properties (2)--(3). }
Since $W$ starts with an $S$-edge and uses edges in $S$ and $S'$ alternately, for any slot $x$ on $W$, its preceding and succeeding participants are $S(x)$ and $S'(x)$, respectively. As $S'$ is computed earlier than $S$, Lemma~\ref{lem:successively-better}  implies that $S(x)\succsim_x S'(x)$.

As for the first slot $x^*= S(p^*)$ on $W$, we see that $S'(x^*) \succsim_{x^*} S(x^*)$; otherwise, we could improve~$S'$ by flipping along $W[p^*, S'(x^*)]$, i.e., adding $(p^*, x^*)$ (which is the same as $(S(x^*),x^*)$) and removing $(S'(x^*), x^*)$, which would make $x^*$ better off. 
It then follows that $S(x^*)\sim_{x^*}S'(x^*)$. As $S(x^*)$ is the participant preceding $x^*$ in $W$, property (3) holds for $x^*$. 

For the remaining parties on $W$, we show properties (2) and (3) by induction.
Take a participant~$p$ on $W$ other than $p_{j_1}=p^*$, and suppose that property (3) holds for all parties on $W$ preceding $p$. 
The slot preceding $p$ in $W$ is $S'(p)$ and the one succeeding $p$ is $S(p)$ if it exists. 
The participant $p$ must have an index smaller than that of $p^*$; otherwise, we could improve $S'$ by flipping along $W[p^*,p]$. 
Since $p$ has an index smaller than that of $p^*=p_{j_1}$, by the choice of $p^*$, we have that $p$ satisfies condition (b), i.e., $p$ is matched to a slot of a least-preferred eligible team in $S$. 
Since the eligibility set is monotone increasing and $S'$ is computed earlier than $S$, this implies that $S'(p)\succsim_p S(p)$ whenever $p$ is matched in $S$. 
If $p\neq q$, then by the choice of $q$, the participant $p$ is matched in $S$ and $S'(p)\not\succ_p S(p)$, and hence we have $S'(p)\sim_p S(p)$. 
Thus, properties (2) and~(3) hold for $p$. 
In addition, when $p\neq q$, the slot $y\coloneqq S(p)$, which succeeds $p$ in $W$, is matched in~$S'$ and satisfies $S'(y)\succsim_y S(y)=p$; otherwise, we could improve $S'$ by flipping along $W[p^*, S'(y)]$, which would make $y$ better off. 
At the same time, Lemma~\ref{lem:successively-better} implies that $S(y)\succsim_y S'(y)$. 
Thus, we have  $S'(y)\sim_{y}S(y)$, i.e., property (3) holds for the slot succeeding $p$ in $W$.

\smallskip
\noindent\textit{Property (4). }
Finally, we show that $q$ is either unmatched in $S$ or satisfies $S'(q)\succ_{q} S(q)$.
Suppose for contradiction that this does not hold. 
This means that $q$ is matched in $S$ and $S(q)\succsim_q S'(q)$.
By the argument in the previous paragraph, we also have $S'(q)\succsim_q S(q)$, so it must be that $S'(q)\sim_q S(q)$.
Moreover, by the definition of $q$, the path $R$ ends at the slot $S(q)$ which is unmatched in $S'$. 

Then, $W+(q,S(q))=R$ holds and it is an $S'$-alternating path whose first and last parties are unmatched in $S'$ and all intermediate parties prefer the preceding and succeeding parties equally. Hence, we can improve $S'$ by flipping along $R$, a contradiction.
\smallskip

It follows that the walk $W$ satisfies properties (1)--(4) in Case 2.

\end{document}